\documentclass[11pt]{article}
\usepackage[margin=1in]{geometry}
\usepackage[T1]{fontenc}
\usepackage{times}
\usepackage{amsmath,amssymb,amsthm,mathtools,booktabs,array,graphicx,microtype}
\usepackage{natbib}
\usepackage{enumitem}
\setlist{nosep,leftmargin=*}
\usepackage{tikz}
\definecolor{lineA}{HTML}{0072B2}
\definecolor{lineB}{HTML}{E69F00}
\definecolor{lineC}{HTML}{009E73}
\usepackage[hidelinks]{hyperref}
\hypersetup{pdftitle={Randomization Beyond Deterministic Adaptivity in Parallel Sampling},pdfauthor={Da Li, Bo Jin, Xiaopeng Wei}}
\newtheorem{theorem}{Theorem}
\newtheorem{lemma}[theorem]{Lemma}
\newtheorem{corollary}[theorem]{Corollary}
\newtheorem{proposition}[theorem]{Proposition}
\newenvironment{keywords}{\begin{quote}\small\noindent\textbf{Keywords:}\ }{\end{quote}}
\newcommand{\KL}{\operatorname{D}}
\newcommand{\TV}{\operatorname{TV}}
\newcommand{\TC}{\operatorname{TC}}
\newcommand{\Aff}{\operatorname{Aff}}
\newcommand{\E}{\mathbb{E}}
\newcommand{\F}{\mathbb{F}}
\newcommand{\ind}{\mathbf{1}}
\newcommand{\ad}{\mathrm{ad}}
\newcommand{\mix}{\mathrm{mix}}
\newcommand{\Ber}{\operatorname{Ber}}
\newcommand{\logit}{\operatorname{logit}}
\newcommand{\sigmoid}{\operatorname{sigmoid}}

\title{Randomization Beyond Deterministic Adaptivity\\ in Parallel Sampling}
\author{Da Li \qquad Bo Jin\thanks{Co-corresponding authors: \texttt{jinbo@dlut.edu.cn} and \texttt{xpwei@dlut.edu.cn}.} \qquad Xiaopeng Wei\footnotemark[1]\\[4pt]
\small School of Computer Science and Technology, Dalian University of Technology\\
\small Dalian, Liaoning, China\\
\small\texttt{dli@mail.dlut.edu.cn}\quad\texttt{jinbo@dlut.edu.cn}\quad\texttt{xpwei@dlut.edu.cn}}
\date{}
\begin{document}
\maketitle
\begin{abstract}
Parallel samplers, including masked diffusion models, generate several coordinates per round from conditional marginals, and their accuracy depends on which coordinates are revealed together. A schedule can be improved in two natural ways: adaptively, by choosing later batches from the values already generated, or by randomization, by drawing the schedule at random. We show that randomization is strictly more powerful. For every $r\ge1$ we construct a full-support distribution on $9r$ binary coordinates such that every deterministic three-round sampler, even with value-dependent batches and arbitrary product kernels, has forward Kullback--Leibler divergence at least $\gamma r-\log4$ with $\gamma\approx0.0966$, whereas drawing one of four fixed schedules with two fair bits drives the divergence below $18r\beta^r$ with $\beta\approx0.966$. Both samplers query conditionals of the same target. The separation holds for every fixed number of rounds $T\ge2$, a strict advantage survives perturbations of the target and errors in the conditional predictor, and exact rational arithmetic certifies the gap at nine coordinates, where trained neural predictors preserve it. The mechanism is a hidden mixture: each schedule preserves the dependence pattern of one component, while no deterministic first batch can serve them all.
\end{abstract}

\begin{keywords}
parallel sampling, masked diffusion models, randomized schedules, adaptivity, separation, total correlation
\end{keywords}

\section{Introduction}
Masked diffusion models and related parallel samplers generate a high-dimensional sample in a small number of rounds. In each round a batch of coordinates is drawn independently from conditional marginals given the coordinates generated so far, and committed values are never revised \citep{chen,lavenant,secchi}. Fewer rounds mean faster generation, but coordinates that remain dependent given the past are drawn as if they were independent. The resulting error is governed by the \emph{schedule}, the choice of which coordinates are revealed together.

There are two natural ways to choose a good schedule. An \emph{adaptive} rule selects each batch as a function of the values generated so far, as in confidence-based decoding \citep{cai}. For deterministic adaptive rules, \citet{wen} give an exact accounting: the forward divergence equals the expected conditional total correlation discarded along the reveal tree. A \emph{randomized} sampler instead draws a schedule at random and executes it once, so its output is a mixture of the laws of fixed schedules. By convexity this mixture is never worse than the average schedule \citep[Theorem~3.3]{chen}. Whether it can be much better than every deterministic rule, adaptive or not, is a different question, raised by \citet[Remark~2]{wen}:
\begin{center}
\emph{Can randomizing the schedule achieve an accuracy that no deterministic schedule,\\ however adaptive, can match?}
\end{center}
We answer this question affirmatively and quantitatively: at three rounds, the deterministic error grows linearly with the dimension, while the randomized error decays exponentially.

\begin{theorem}[Main result, informal]\label{thm:informal}
There are constants $\gamma\approx0.0966$ and $\beta\approx0.966$ such that for every $r\ge1$ there is a full-support distribution $p_r$ on $\{0,1\}^{9r}$ with the following two properties.
\begin{enumerate}
\item The output law $q$ of every deterministic sampler that completes in at most three rounds satisfies $\KL(p_r\|q)\ge\gamma r-\log4$, even when its batches are chosen adaptively from the generated values and its kernels are arbitrary product kernels.
\item The output law $q_\mix$ obtained by drawing one of four fixed three-round schedules with two fair bits, using the exact conditional marginals of $p_r$, satisfies $\KL(p_r\|q_{\mix})\le18r\beta^r$.
\end{enumerate}
\end{theorem}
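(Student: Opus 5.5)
The plan is to build $p_r$ as an $r$-fold product of a single gadget $p$ on nine binary coordinates, arranged as a $3\times3$ array, mixed over a hidden label: $p_r = \frac14\sum_{k=1}^4 \bigotimes_{j=1}^r p^{(k)}$. Here $p^{(k)}$ is a full-support law on the block $\{0,1\}^9$ whose dependence structure is \emph{compatible} with schedule $S_k$. Concretely, under $p^{(k)}$ the coordinates that $S_k$ reveals together in any round are conditionally independent given the earlier rounds. The four schedules $S_1,\dots,S_4$ are fixed partitions of the nine positions into three batches (for instance rows, columns, and two diagonal families), replicated identically across all $r$ blocks. The label $k$ is shared across blocks, which is what makes a single coin flip useful.

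\textbf{Upper bound.} Fix $k$ and run $S_k$ with the exact conditional marginals of $p_r$. By the chain rule, or equivalently the total-correlation accounting of \citet{wen}, $\KL(p_r\|q_{S_k})$ equals the expected conditional total correlation discarded in the rounds. Under $p_r$ the batch is a mixture over the label, and given the label $k$ it would be conditionally independent. So the discarded total correlation is bounded by the uncertainty about the label remaining after the previous rounds, and the label information grows with $r$.

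I would make this precise through the posterior on the label. After round one, the posterior odds of the wrong labels shrink like a product over blocks of per-block likelihood ratios. The expectation of that product decays as $\beta^r$, where $\beta$ is a Bhattacharyya/affinity coefficient $\Aff$ between the per-block round-one marginals of different components; this is the natural source of $\beta\approx0.966$.

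A cleaner route is to compare $q_\mix$ to $p_r$ through convexity. Write $p_r=\frac14\sum_k p^{(k)}_r$ and $q_\mix=\frac14\sum_k q_{S_k}$, so that $\KL(p_r\|q_\mix)\le\frac14\sum_k \KL(p^{(k)}_r\|q_{S_k})$. Each $q_{S_k}$ uses conditionals of the mixture rather than of $p^{(k)}_r$, so each term is bounded by a sum over the three rounds and the $9r$ coordinates. Each summand is controlled by the posterior mass on wrong labels, which gives a bound of the form $9r\cdot 2\beta^r=18r\beta^r$.

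\textbf{Lower bound.} Let the deterministic adaptive sampler have an arbitrary first batch $B_1$, which is fixed since nothing has been revealed yet. The gadget is designed so that for every subset $B$ of the nine block positions there is a component $k$ under which $B\cap\text{block}$ carries strictly positive total correlation conditional on the label being $k$. The intended reason is that no single first batch is compatible with all four components. By pigeonhole over blocks, $B_1$ intersects each block in some pattern, and by a finite check, which is the exact rational computation at nine coordinates, each block contributes at least $\gamma$ of discarded total correlation in some round. The difficulty is that the label is hidden and is revealed gradually.

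To handle this, I would condition on the label: $\KL(p_r\|q)\ge \KL(p_r(\cdot|K)\|q)$ in expectation minus $I(K;X)\le\log4$. This is where the $-\log 4$ comes from. Then $\E_K\KL(p^{(K)}_r\|q)$ is evaluated with the reveal-tree formula per component. Because $p^{(K)}_r$ is a product over blocks, the discarded conditional total correlation decomposes blockwise. Each block's contribution is minimized over all adaptive reveal orders of that block, which is a finite optimization over three-round reveal trees on nine bits with arbitrary product kernels. Its minimum value, averaged over $k$, defines $\gamma$.

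The main obstacle is this per-block minimization. Adaptive batches may depend on other blocks' values, so I must argue that the minimum over adaptive trees is still at least the minimum over per-block deterministic strategies. This holds because the other blocks are independent given $K$ and merely act as randomness, and a mixture of deterministic per-block strategies cannot beat the best of them, by linearity of the expected divergence in the strategy. Certifying $\gamma>0$, and the choice of gadget making it about $0.0966$, is the computational core.
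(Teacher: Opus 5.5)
There is a genuine gap. The two facts that carry the theorem are asserted rather than established, and the justifications you give for both fail.

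\textbf{Upper bound.} Your convexity step $\KL(p_r\|q_{\mix})\le\frac14\sum_k\KL(p^{(k)}_r\|q_{S_k})$ is right. Making each term small, however, requires that the first batch of $S_k$ be a product of fair bits under $p^{(k)}$ and yet have a different law under \emph{every} other component. You never construct components with this property, and your example violates it. Suppose the batches of $S_k$ are the three parallel lines of one direction, and each component carries independent line laws along a single direction. Then the first batch of $S_k$ is a transversal for the lines of two of the three wrong directions, so it is three independent fair bits under those components too. The round-one affinity is $1$, and the posterior stays split evenly among three labels. The exact mixture kernels of round two are then off by a constant for each of the $3r$ round-two coordinates, so $\KL(p^{(k)}_r\|q_{S_k})$ is of order $r$.

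The missing idea is a schedule each of whose rounds meets every $s$-line exactly once, while its first round meets some line of each other direction in two points. The paper builds it from the parabolic slices $C_s=\{tv_s+t^2s\}$ (Lemma~\ref{lem:slices}), which give per-block affinity $\beta=\sqrt{3/8}+\sqrt{1/8}$. The constant $18$ is $6r$ coordinates after round one times the three wrong-label affinities, each at most $\beta^r$. Round one is exact because all components have fair marginals. This uses $p(X_j\mid h)\ge\theta_s(h)P_s(X_j\mid h)$ for the posterior $\theta_s$, together with the monotonicity \eqref{eq:posterior-monotone} of the expected posterior log loss (Lemma~\ref{thm:posterior}). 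Your count $9r\cdot2\beta^r$ has no derivation behind it. Your first route is also false, not merely less clean: each $S_k$ is itself a deterministic rule, so part~1 gives $\KL(p_r\|q_{S_k})\ge\gamma r-\log4$.

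\textbf{Lower bound.} The property you ascribe to the gadget, that every first batch carries positive total correlation under some component, fails for $|B|\le1$ and is not the mechanism. What forces cost is how the first batch occupies the lines of the true direction when only two rounds remain. An empty line must be split across those rounds and loses at least the conditional pair information $b$. A line with two or three first-round points loses $a$ or $2a+b$ at once. A line with one point can be completed for free. Averaging over the four directions gives $F(C)\ge(a+5b)/4=\gamma$ (Lemma~\ref{lem:firstbatch}). This closed form is where $\gamma$ comes from; the nine-coordinate certificate concerns $r=1$, where $\gamma-\log4<0$.

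More seriously, your reduction to a single block is justified incorrectly. The other blocks are not label-free randomness: their law depends on $K$, and their first-round values can reveal it. The induced mixture of block strategies therefore has label-dependent weights, and ``a mixture cannot beat the best of them'' does not apply. The valid reduction fixes only the value-independent first batch in each block. Then, for each label separately, it bounds the continuation by the best one under that label (Appendix~\ref{app:local}).

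This is not a technicality. For the paper's gadget, the best label-independent block strategy costs strictly more than $(a+5b)/4$ on average: attaining that value would need a three-point second batch with no two points collinear in $\F_3^2$. A bound built on it would contradict \eqref{eq:three-sharp}, where a deterministic rule reads the label off first-round pairs across all blocks and attains $r\gamma-\log4$ up to an exponentially small term.
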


Theorem~\ref{thm:three} states the result formally, and Figure~\ref{fig:construction} shows the construction. Both samplers query conditional probabilities of the same known target, and the randomized sampler spends only two random bits per sample beyond the randomness of the coordinate draws; ``deterministic'' refers to the choice of batches.

\paragraph{What the separation shows.}
Adaptivity and randomization are different resources. Adaptivity uses observations to choose future batches; randomization changes the output law by averaging over schedules. Theorem~\ref{thm:informal} shows that the first cannot substitute for the second, even with exact conditional probabilities and three rounds. Two consequences are worth stating. First, lower bounds proved for deterministic adaptive rules, for instance through conditional total correlation, do not carry over to randomized samplers, so the random orders used by masked diffusion models call for an analysis of the mixture itself. Second, the convexity bound of \citet{chen} can be loose by an unbounded factor: each of the four schedules in Theorem~\ref{thm:informal} is itself a deterministic rule, with divergence at least $\gamma r-\log4$, yet their uniform mixture has exponentially small divergence.

\paragraph{The mechanism: a hidden mixture.}
The target is a uniform mixture of four distributions with identical one-coordinate marginals and different dependence patterns; the component label is not part of the output, so every conditional the sampler queries averages over it. Each of the four schedules is tailored to one component. Under that component its batches are conditionally independent, so the schedule reproduces the component; under the other components its first batch is correlated, which lets the generated values reveal the label. Mixing the four schedules then reproduces the whole target. A deterministic rule faces a dilemma at its first batch, which it must commit to before seeing any value. The four dependence patterns are incompatible, and a short finite-geometry argument over $\F_3^2$ shows that every first batch discards dependence in a constant fraction of coordinate groups, whatever the later adaptive choices.

\paragraph{Further results.}
\begin{itemize}
\item \emph{Sharpness.} The deterministic optimum equals $\gamma r-\log4$ up to an exponentially small term, and the separation also holds in total variation (Theorem~\ref{thm:three}).
\item \emph{Every fixed depth.} For each fixed number of rounds $T\ge2$ there are targets with linear deterministic error and exponentially small randomized error (Theorems~\ref{thm:two} and~\ref{thm:allT}). One round admits no advantage.
\item \emph{All four schedules are needed.} Uniform weights achieve exponential accuracy, unequal positive weights $\pi$ leave exactly the floor $\KL(u_4\|\pi)$, with $u_4$ the uniform weights, and omitting any schedule restores a linear error even after reoptimizing every kernel (Proposition~\ref{prop:weights}).
\item \emph{Robustness.} A strict advantage survives perturbations of the component laws and errors in the conditional predictor, with explicit thresholds (Corollary~\ref{cor:oracle} and Proposition~\ref{prop:perturb}).
\item \emph{Exact certificates.} At nine coordinates, exact rational interval arithmetic certifies a deterministic optimum of $0.0781127$ nats against $0.0702293$ for the four-schedule mixture, and neural predictors fitted to the target's conditionals preserve the gap (Section~\ref{sec:verification}).
\end{itemize}

\paragraph{Related work.}
Error analyses of parallel sampling quantify the cost of independent coordinate updates through information profiles and conditional total correlation \citep{chen,lavenant,secchi,wen}. \citet{licai} bound the error of diffusion language models by the mutual information between tokens divided by the number of iterations; \citet{zhaocai} randomize the number of tokens revealed per step and obtain distribution-agnostic bounds in terms of total and dual total correlation; \citet{wainwright} certifies adaptive schedules through unmasking growth complexity; and \citet{zhang2026} separate order sensitivity from parallelization bias. These works bound or optimize the error of a given class of schedules, whereas we compare two classes. Our result also complements the randomized sizes of \citet{zhaocai}: randomizing the schedule can be strictly necessary even when the target is known.

\citet{chen} characterize optimal schedules through the information curve and distinguish divergence averaged over schedules from divergence of the final mixture; \citet{wen} derive the exact total-correlation identity for deterministic adaptive rules. Our construction gives a quantitative separation between these two classes of samplers. The first arXiv version of this work \citep{lijinwei} used a construction in which the component label was among the output coordinates; the present target outputs only data, so every conditional query averages over the unobserved label. In concurrent work, \citet{wu} studies the number of rounds needed to reach a fixed accuracy, including separations on balanced slices, full-support extensions and lower bounds for causal randomized policies. We study accuracy at a fixed depth: a constant-size library attains exponentially small error on a full-support target whose component label is never observed. \citet{wucai} show that transformer-based masked diffusion models learn mixtures of hidden Markov trees under parallel unmasking; our targets are also latent-variable mixtures, and for them the choice between deterministic and randomized schedules matters even with exact conditionals. Confidence-based decoding \citep{cai} analyzes particular adaptive rules, and autospeculative sampling \citep{anari} uses rejection, a different sampling interface.

\paragraph{Organization.}
Section~\ref{sec:tools} formalizes the two classes of samplers. Section~\ref{sec:three} proves the three-round separation, Section~\ref{sec:depth} extends it to every fixed depth, and Section~\ref{sec:oracle} establishes robustness. Section~\ref{sec:verification} reports exact certificates and learned predictors. Remaining proofs are in the appendix.

\section{Sampling model}\label{sec:tools}
All spaces are finite and logarithms are natural. We use the Kullback--Leibler (KL) divergence, the total-variation (TV) distance and the affinity (Bhattacharyya coefficient)
\[
\KL(p\|q)=\sum_xp(x)\log\frac{p(x)}{q(x)},\quad
\TV(p,q)=\tfrac12\sum_x|p(x)-q(x)|,\quad
\Aff(p,q)=\sum_x\sqrt{p(x)q(x)}.
\]

\paragraph{Deterministic adaptive samplers.}
A deterministic policy chooses a nonempty set $B(h)$ of unrevealed coordinates from the complete committed history $h$ and commits independent draws from scalar kernels $K_{j,h}$; committed values are never revised. A budget of $T$ rounds requires completion on every trajectory. With exact kernels, $K_{j,h}=p(X_j\mid h)$. The output law is $q_{A,K}$; the kernels need not arise from a common joint distribution.

\paragraph{Randomized fixed samplers.}
A fixed schedule is an ordered partition chosen before generation. A randomized fixed sampler draws one schedule $s$ with probability $\pi_s$, independently of the coordinate draws, executes it once and returns the completed vector, so its law is $\sum_s\pi_s q_s$. Both classes know the target.

\paragraph{The reveal-tree identity.}
At a history $h$, the conditional total correlation of a batch is
\[
\TC_P(X_B\mid h)=\KL\left(P(X_B\mid h)\middle\|\prod_{j\in B}P(X_j\mid h)\right).
\]
For any reference distribution $P$, every deterministic tree satisfies
\begin{equation}\label{eq:tree}
\KL(P\|q_{A,K})=
\E_P\sum_t\left\{\TC_P(X_{B_t}\mid h_t)
 +\sum_{j\in B_t}\KL\bigl(P(X_j\mid h_t)\|K_{j,h_t}\bigr)\right\}.
\end{equation}
A completed vector determines a unique path; factoring both laws along that path and inserting the product of reference scalar conditionals proves the identity. In particular, exact target marginals minimize forward KL for each fixed deterministic rule. This is the kernel-error extension of the identity of \citet{wen}.

\paragraph{Mixture targets.}
Let $p(x)=\sum_{z=1}^K\nu_zP_z(x)$ with $\nu_z>0$. The latent index $Z$ is only a mathematical representation of $p$; it is never observed by either sampler.
\begin{lemma}[Componentwise divergence]\label{lem:latent}
For every output law $q$,
\begin{equation}\label{eq:discount}
\KL(p\|q)=\sum_z\nu_z\KL(P_z\|q)-I(Z;X).
\end{equation}
Consequently, any lower bound $L$ on the average component divergence yields
$\KL(p\|q)\ge L-H(\nu)$.
\end{lemma}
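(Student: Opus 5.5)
The identity \eqref{eq:discount} follows by expanding each component divergence and averaging over $Z$. The plan is to introduce the joint law of the latent index and the output, $(Z,X)$ with $\Pr(Z=z,X=x)=\nu_zP_z(x)$. Its $X$-marginal is $p$ and its conditionals are $P_z$. Throughout I assume $\KL(p\|q)<\infty$, which means $q(x)>0$ wherever $p(x)>0$. Otherwise some $x$ with $p(x)>0$ has $q(x)=0$, so some $P_z(x)>0$, and both sides of \eqref{eq:discount} are $+\infty$; note that $I(Z;X)\le H(\nu)<\infty$.

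For each $z$ I write, pointwise on the support of $P_z$,
\[
\log\frac{P_z(x)}{q(x)}=\log\frac{P_z(x)}{p(x)}+\log\frac{p(x)}{q(x)},
\]
which is legitimate because $P_z(x)>0$ forces $p(x)\ge\nu_zP_z(x)>0$. Taking expectations under $P_z$ and then averaging with weights $\nu_z$ gives two terms. The first is $\sum_z\nu_z\KL(P_z\|p)$, the standard expression for $I(Z;X)$: it is the divergence between the joint law and the product of its marginals, since $P(X\mid Z=z)=P_z$ and $P(X)=p$. The second is $\sum_x\bigl(\sum_z\nu_zP_z(x)\bigr)\log\frac{p(x)}{q(x)}=\KL(p\|q)$. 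Rearranging yields \eqref{eq:discount}. All sums are finite, so interchanging them is harmless.

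For the consequence I use $I(Z;X)=H(Z)-H(Z\mid X)\le H(Z)=H(\nu)$, since conditional entropy of a discrete variable is nonnegative. So if $\sum_z\nu_z\KL(P_z\|q)\ge L$, then $\KL(p\|q)\ge L-I(Z;X)\ge L-H(\nu)$.

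The only delicate point is bookkeeping of supports and infinite values, so that the pointwise log-splitting is valid; this is handled by the reduction above. With uniform $\nu$ on four components, $H(\nu)=\log4$, which is the source of the $-\log4$ in the main theorem.
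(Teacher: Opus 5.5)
Your proposal is correct and follows the paper's own proof: expand the averaged component log-likelihood ratio, insert $p(x)$, identify $\sum_z\nu_z\KL(P_z\|p)=I(Z;X)$, and bound $I(Z;X)\le H(\nu)$. Your explicit treatment of supports and of the case $\KL(p\|q)=\infty$ is a harmless refinement that the paper leaves implicit.
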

\begin{proof}
Expand the average component log likelihood ratio and insert $p(x)$. The additional term is $\sum_z\nu_z\KL(P_z\|p)=I(Z;X)$, and $I(Z;X)\le H(\nu)$.
\end{proof}
Combined with \eqref{eq:tree}, the lemma reduces lower bounds for the mixture to the average componentwise cost of the same sampler, which never observes the label, at a price of at most the label entropy.

\section{The three-round separation}\label{sec:three}
\begin{figure}[t]
\centering
\begin{tikzpicture}[font=\small,
  dot/.style={circle,inner sep=0pt,minimum size=2.7mm,draw=black!60,line width=0.35pt},
  big/.style={circle,inner sep=0pt,minimum size=4.6mm,line width=0.5pt,font=\scriptsize}]
\foreach \k/\name/\dx in {0/{(0,1)}/0, 1/{(1,0)}/2.0, 2/{(1,1)}/4.0, 3/{(1,2)}/6.0}{
  \foreach \x in {0,1,2}{\foreach \y in {0,1,2}{
     \pgfmathtruncatemacro{\ci}{\k==0 ? \x : (\k==1 ? \y : (\k==2 ? mod(\y-\x+3,3) : mod(\x+\y,3)))}
     \ifcase\ci \def\col{lineA}\or \def\col{lineB}\or \def\col{lineC}\fi
     \node[dot,fill=\col] at ({\dx+0.5*\x},{0.5*\y}) {};
  }}
  \node[font=\footnotesize] at ({\dx+0.5},-0.5) {$z=\name$};
}
\node[font=\footnotesize] at (3.75,-1.15) {(a) the four components; colors mark the lines of direction $z$};
\begin{scope}[xshift=9.4cm]
  \fill[black!10,rounded corners=4pt] (-0.42,0.42) rectangle (1.92,1.08);
  \foreach \x/\y/\t in {0/0/1,0/1/2,0/2/3,1/0/3,1/1/1,1/2/2,2/0/3,2/1/1,2/2/2}{
     \ifcase\x \def\col{lineA}\or \def\col{lineB}\or \def\col{lineC}\fi
     \node[big,draw=\col,fill=\col!20] at ({0.75*\x},{0.75*\y}) {\t};
  }
  \node[font=\scriptsize,anchor=west,align=left,text=black!75] at (2.05,0.75) {a line of\\ $w=(1,0)$};
  \node[font=\footnotesize] at (1.05,-1.15) {(b) rounds of the schedule for $s=(0,1)$};
\end{scope}
\end{tikzpicture}
\caption{The construction on $V=\F_3^2$. (a) Each direction $z\in\mathcal D$ partitions $V$ into three parallel lines, shown in three colors; under component $P_z$, every line carries an independent copy of the law $R$ in~\eqref{eq:R}. (b) The schedule assigned to $s=(0,1)$ reveals the slices $C_s$, $D_s$ and $E_s$ of Section~\ref{sec:four} in rounds $1$, $2$ and $3$. Every $s$-line (one color) meets each round once, so the schedule factorizes $P_s$. The shaded line of the wrong direction $w=(1,0)$ has two first-round points, a correlated pair under $P_w$ that reveals the hidden component.}
\label{fig:construction}
\end{figure}

\paragraph{The construction.}
Let $V=\F_3^2$ and $\mathcal D=\{(0,1),(1,0),(1,1),(1,2)\}$. Each direction partitions $V$ into three affine lines. On each line, independently, put the full-support law
\begin{equation}\label{eq:R}
R(000)=R(111)=5/16,\qquad R(x)=1/16\quad\text{otherwise}.
\end{equation}
Denote the resulting nine-coordinate distribution for direction $z$ by $P_z$. The target distribution is
\begin{equation}\label{eq:dataonly}
p_r(x^{1:r})=\frac14\sum_{z\in\mathcal D}\prod_{\ell=1}^rP_z(x^\ell),\qquad d=9r.
\end{equation}
Only the $9r$ data coordinates are output, and every queried conditional is a conditional of \eqref{eq:dataonly}. Define
\begin{align}
a&=\tfrac34\log3-\log2, &
b&=\log2-\tfrac32\log3+\tfrac58\log5,\\
\gamma&=\tfrac14(a+5b)=0.09661231252165\ldots,&
\beta&=\tfrac12\left(\sqrt{3/2}+\sqrt{1/2}\right),\\
\sigma&=\left\|\sqrt{\begin{pmatrix}5/8&1/8\\1/8&1/8\end{pmatrix}}\right\|_{\mathrm{op}}<1,
\end{align}
where the square root of the matrix is entrywise. Let $E_\ad(p,T)$ be the infimum over deterministic exact-marginal policies with round budget $T$; by \eqref{eq:tree} it equals the optimum over arbitrary product kernels.

\begin{theorem}[Three-round separation]\label{thm:three}
For every $r\ge1$:
\begin{align}
\inf_{A,K:\,T\le3}\KL(p_r\|q_{A,K})&\ge r\gamma-\log4,\label{eq:three-lower}\\
0\le E_\ad(p_r,3)-(r\gamma-\log4)
&\le 3\bigl[7r(1+\log2)+1\bigr]\beta^{2\lfloor r/4\rfloor},\label{eq:three-sharp}\\
\KL(p_r\|q_{\mix,r})&\le18r\beta^r,\label{eq:three-upper}\\
\inf_{A,K:\,T\le3}\TV(p_r,q_{A,K})
&\ge\tfrac14-\tfrac12\sigma^{3r/4}.\label{eq:three-tv}
\end{align}
The mixture in \eqref{eq:three-upper} uses four fixed three-round schedules and two fair scheduling bits. The lower bounds allow arbitrary product kernels. At $r=1$, exact rational interval computations certify
\begin{equation}\label{eq:nine}
\KL(p_1\|q_{\mix,1})<0.0702292963221
<0.0781126887145<E_\ad(p_1,3).
\end{equation}
\end{theorem}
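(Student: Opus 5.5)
The plan treats the four claims in turn.

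\emph{Lower bound \eqref{eq:three-lower}.} By Lemma~\ref{lem:latent} with $H(\nu)=\log4$, it suffices to show that every deterministic three-round policy satisfies
\[
\frac14\sum_{z}\KL\bigl(P_z^{\otimes r}\|q_{A,K}\bigr)\ge r\gamma .
\]
Apply the tree identity \eqref{eq:tree} with reference $P=P_z^{\otimes r}$ for each $z$. Under $P_z^{\otimes r}$ the $r$ blocks are independent, and within a block the three $z$-lines are independent. So every total-correlation term at a history $h$ splits into a sum over $z$-lines of the total correlation of $R$ restricted to the batch, given the revealed points of that line. There are only two nonzero cases. If two points of a line are revealed together with nothing revealed on that line before, the loss is $I_R(X_1;X_2)$. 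If all three points are revealed together, the loss is $\TC(R)$; in a three-round tree one can instead lose a conditional pair term. I expect $a$ and $b$ to be exactly these elementary quantities (for instance $a=I_R(X_1;X_2)$ and $a+b=\TC(R)$, or similar). I will check the exact assignment by computing $R$'s pair marginal $(3/8,1/8;1/8,3/8)$ and its conditionals.

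The combinatorial core concerns the first batch $B_1$, restricted to a block and viewed as a subset $S\subseteq\F_3^2$. It is chosen before any value is seen, so it is common to all four $z$. For each direction $z$, count the $z$-lines meeting $S$ in $\ge2$ points. These incur a first-round loss under $P_z$, and a kernel-mismatch term can only add to it. Lines meeting $S$ in $0$ or $1$ point must then be completed in rounds $2$ and $3$. A line with $0$ points in $S$ has three remaining points, which two rounds cannot separate, so some pair is revealed together conditionally on nothing from that line. This costs at least $a$ regardless of adaptivity, because conditioning on other lines is irrelevant under $P_z$.

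The finite-geometry lemma I would prove is this: for every $S\subseteq\F_3^2$, summing over the four directions the per-direction minimal forced cost gives at least $a+5b$. The proof is a case analysis on $|S|$. If $|S|\le2$, many directions have empty lines, which forces loss through the pigeonhole argument in the later rounds. If $|S|\ge3$, $S$ contains two points of some line in every direction except when $S$ is a transversal. One then checks that no three points are pairwise in distinct lines for all four directions: any two distinct points determine a line in exactly one direction, and $3$ points give $3$ pairs covering at most $3$ of the $4$ directions. This step is the main obstacle, since the adaptive later rounds must be shown irrelevant and the constant must come out exactly $a+5b$. I would start by verifying it through exhaustive enumeration over $2^9$ subsets, up to affine symmetry, and then write the short case proof. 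Summing over blocks gives $r\gamma$.

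\emph{Upper bound \eqref{eq:three-upper}.} For each $s\in\mathcal D$, take the schedule $C_s,D_s,E_s$ consisting of the three lines parallel to a direction $s'\neq s$ (Figure~\ref{fig:construction}), in every block simultaneously. Then $q_s$ factorizes $P_s^{\otimes r}$ exactly. Write $q_s$ as a mixture over the posterior of $z$ given round-one values. Under $P_s$, the posterior weight on $w\neq s$ after round one decays like a Bhattacharyya product $\beta^{r}$ per block. Hence
\[
q_s=(1-\epsilon_s)P_s^{\otimes r}+\epsilon_s(\cdot),
\]
and by convexity $\KL(p_r\|\tfrac14\sum_s q_s)$ is controlled by $\sum\epsilon$ through the bound $\KL\le\log(1+\chi^2)$, or via affinities. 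I would compute the one-block affinity between $R$ on a pair and the product of marginals, and check that it equals $\beta$, then absorb the constants into $18r$.

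\emph{Sharpness \eqref{eq:three-sharp}.} Exhibit a deterministic policy attaining $a+5b$ per block. It uses the optimal first batch from the lemma in every block, and from round two on it adaptively picks the schedule matching the posterior-most-likely $z$, using about $r/4$ blocks as a test. The slack is $I(Z;X)$ versus $\log4$ plus misidentification terms. These decay like $\beta^{2\lfloor r/4\rfloor}$, by the same affinity estimate applied to $\lfloor r/4\rfloor$ blocks.

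\emph{Total variation \eqref{eq:three-tv}.} Use the same first-batch dilemma. On at least one direction, every deterministic policy outputs correlated pairs as products on $\ge3r/4$ blocks. Bound the affinity by $\sigma^{3r/4}$, since the per-line mismatch matrix has operator norm $\sigma$. Then apply $\TV\ge\frac14-\frac12\Aff$ after restricting to that component with weight $1/4$.

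\emph{Certificate \eqref{eq:nine}.} At $r=1$, enumerate all deterministic adaptive trees on nine coordinates, up to symmetry, with exact rational arithmetic plus interval bounds for the logarithms.
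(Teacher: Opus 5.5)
Your upper-bound schedule does not work. Taking the three lines of one direction $s'\neq s$ as the batches for $s$ does factorize $P_s$. The trouble is that the first batch in each block is then an $s'$-line, and an $s'$-line meets every line of each direction $w\notin\{s,s'\}$ exactly once. Under those two $P_w$ the first batch is three independent fair bits, exactly as under $P_s$. So the first-batch affinity is $1$ rather than $\beta^r$, and $b_s\ge2$. It gets worse in round two. Under $P_s$ the posterior after round one gives $s$ and these two $w$ exactly equal weight. For each point of the second line, the lines through it in the three directions other than $s'$ meet the first line in three different points. The round-two kernels therefore average three different conditionals, and the scalar-kernel terms of \eqref{eq:tree} already make $\KL(P_s^{\otimes r}\|q_s)$ linear in $r$. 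Neither Lemma~\ref{thm:posterior} nor your convexity route can then give \eqref{eq:three-upper}. The missing idea is Lemma~\ref{lem:slices}: the first slice must meet each $s$-line once and yet contain two points of some line in \emph{every} other direction. A line fails this, since it is non-transversal only to its own direction. A noncollinear triple works: it is transversal to exactly one direction and has a pair on a line of each of the other three. The parabola $C_s=\{tv_s+t^2s\}$, with its translates by $s$ and $2s$, realizes this for direction $s$. Lemma~\ref{thm:posterior} with $\Delta_s=a_s=0$, $N_s=6r$ and $b_s\le3\beta^r$ then gives $18r\beta^r$. Your analytic step also has to be done in expectation. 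A pointwise decomposition $q_s=(1-\epsilon_s)P_s^{\otimes r}+\epsilon_s(\cdot)$ with small $\epsilon_s$ is unavailable, because $\theta_s(h)$ is exponentially small on atypical first-round outcomes. Instead, bound each later scalar-kernel term by $-\log\theta_s(h)$, use the monotonicity \eqref{eq:posterior-monotone}, and use $\log(1+\sum_iu_i)\le\sum_i\sqrt{u_i}$.

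Your lower bound has the paper's structure: Lemma~\ref{lem:latent}, \eqref{eq:tree} under each $P_z^{\otimes r}$, independence of lines given the cylinder history, and a first-batch count. However, the cost you assign to an empty line is wrong. Two remaining rounds do not force a pair to be revealed with nothing else of its line known. Reveal one point in round two and the other two in round three, and the loss is the conditional pair information $b$. This is the same for both values of the conditioning bit, which makes the bound pathwise and disposes of your adaptivity worry. The correct costs are $b,0,a,2a+b$ for first-round occupancy $0,1,2,3$, so $\TC(R)=2a+b$, not $a+b$. Charging $a$ to empty lines would make every first batch cost more than $\gamma$ per block, for instance $6a/4$ for a two-point batch, and \eqref{eq:three-sharp} rules that out. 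Your target $a+5b$ is consistent only with cost $b$. With the correct table, $F(C)$ equals $3b,2b,\gamma$ at sizes $0,1,2$ and $(2a+3b)/4$ or $(3a+3b)/4$ at size three. At size $\ge4$ it is at least $a$, because $N_2+3N_3=\binom{|C|}{2}$. The inequalities $2b<a<3b$ then give Lemma~\ref{lem:firstbatch}.

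Your total-variation plan matches the paper's. For \eqref{eq:three-sharp}, the first pairs must cycle through the four directions. With the same two-point set in every block, the three directions in which its points lie on different lines are indistinguishable after round one. For \eqref{eq:nine}, literally enumerating adaptive trees is infeasible; the certificate needs a Bellman recursion over the $3^9$ partial assignments.
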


\paragraph{Proof outline.}
The proof has three steps. First, every deterministic first batch discards at least $\gamma$ nats of dependence per block, averaged over the four hidden components and whatever the later adaptive choices (Lemma~\ref{lem:firstbatch}); with Lemma~\ref{lem:latent} this gives \eqref{eq:three-lower}. Second, a fixed schedule that factorizes one component pays only for its uncertainty about the hidden label, and the affinity of its first batch controls that uncertainty (Lemma~\ref{thm:posterior}). Third, for each direction $s$, three parabolic slices of $V$ form a schedule that factorizes $P_s$ and exposes a correlated pair under every other component (Lemma~\ref{lem:slices} and Figure~\ref{fig:construction}b); the last two steps give \eqref{eq:three-upper}. Appendix~\ref{app:three} proves \eqref{eq:three-sharp} and \eqref{eq:three-tv}, Appendix~\ref{app:cert} proves \eqref{eq:nine}, and Figure~\ref{fig:bounds} plots both analytic bounds.

\subsection{Every deterministic first batch is costly}\label{sec:lower}
For $C\subseteq V$, let $N_j(C)$ count the affine lines, over all four directions, that meet $C$ in exactly $j$ points, and put
\[
F(C)=\tfrac14\{bN_0(C)+aN_2(C)+(2a+b)N_3(C)\}.
\]
\begin{lemma}[Cost of a first batch]\label{lem:firstbatch}
For every $C\subseteq V$, $F(C)\ge\gamma$, with equality exactly when $|C|=2$. If a deterministic sampler with round budget three reveals $C_\ell$ in block $\ell$ in its first round, then for all product kernels
\[
\frac14\sum_{z\in\mathcal D}\KL(P_z^{\otimes r}\|q_{A,K})\ge\sum_{\ell=1}^rF(C_\ell)\ge r\gamma.
\]
\end{lemma}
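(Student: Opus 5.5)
The plan is to split the statement in two: a per-line cost bound derived from the reveal-tree identity~\eqref{eq:tree}, and a finite-geometry minimisation of $F$ over subsets of $V$.

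\textbf{Step 1: the law $R$ on one line.} The pair marginal of $R$ is $(3/8,1/8,1/8,3/8)$, so $I(X_i;X_j)=\tfrac34\log3-\log2=a$. Computing $H(R)=4\log2-\tfrac58\log5$ gives $\TC(X_1,X_2,X_3)=\tfrac58\log5-\log2=2a+b$. The chain rule and symmetry then give $I(X_i;X_j\mid X_k)=\TC-2a=b$. Numerically $a\approx0.131>b\approx0.051>0$.

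\textbf{Step 2: the componentwise lower bound.} Fix $z$ and apply~\eqref{eq:tree} with reference $P=P_z^{\otimes r}$, dropping the nonnegative kernel terms, so arbitrary product kernels are covered. Under $P_z$ the $z$-lines of all blocks are mutually independent. Hence, for any history $h$, the conditional law of a batch factorises across $z$-lines, and $\TC_{P_z}(X_B\mid h)$ is the sum over $z$-lines $L$ of $\TC_{P_z}(X_{B\cap L}\mid h\cap L)$.

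The key observation concerns adaptivity. On a line $L$, the batches that touch $L$ depend on the history, but before a point of $L$ is revealed that history is independent of the unrevealed part of $L$. Taking expectations therefore yields the unconditional information quantities of Step~1. I would make this precise by case analysis on $|C_\ell\cap L|$, using the budget of three rounds:
\begin{itemize}
\item $|C_\ell\cap L|=3$: the cost is $2a+b$.
\item $|C_\ell\cap L|=2$: the round-one pair, revealed with empty history, costs $a$.
\item $|C_\ell\cap L|=1$: the cost is $\ge0$.
\item $|C_\ell\cap L|=0$: the three points must occupy rounds two and three. The possible patterns are all together ($2a+b$), a pair then a single ($a$ in expectation, since the round-2 batch is chosen from a history independent of $L$), or a single then a pair ($b$ in expectation). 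Each pattern costs at least $b$.
\end{itemize}
Summing over the $z$-lines of each block, averaging over $z\in\mathcal D$ with weight $\tfrac14$, and noting that every affine line belongs to exactly one direction gives $\tfrac14\sum_z\KL(P_z^{\otimes r}\|q_{A,K})\ge\sum_\ell F(C_\ell)$.

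\textbf{Step 3: minimising $F$.} This is the combinatorial core, and I expect it to be the main obstacle. For $|C|=2$ the two points span one line, so $N_2=1$, $N_3=0$, and $N_0=2+3=5$; hence $F=\gamma$.

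For general $C$ with $|C|=k$, I would use the incidence identities of the affine plane $\F_3^2$ with its $12$ lines:
\[
\sum_jN_j=12,\qquad \sum_j jN_j=4k,\qquad \sum_j\tbinom j2N_j=\tbinom k2 .
\]
The last identity holds because every pair of points lies on exactly one line. These identities let me eliminate $N_0$ and $N_2$ in terms of $k$ and $N_3$. $F$ then becomes an explicit function of $k$ and $N_3$ whose coefficient on $N_3$ has a definite sign, so it suffices to bound $N_3$ for each $k$. The small cases $k\in\{0,1,2,3\}$ and the large cases near $9$ will be checked directly, for example $k=0$ gives $3b>\gamma$ and $k=9$ gives $3(2a+b)$. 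In each case I need strict inequality $F>\gamma$ for $k\ne2$, which the numerical values of $a,b$ should make routine but tedious; a short table over $k=0,\dots,9$ with the extremal $N_3$ closes it.

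Combining Steps 2 and 3 gives $\sum_\ell F(C_\ell)\ge r\gamma$.
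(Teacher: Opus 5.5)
Your proposal is correct and is essentially the paper's proof: the same per-line cost table $(b,0,a,2a+b)$, obtained from the reveal-tree identity and the independence of $z$-lines under $P_z$, followed by the same incidence count $N_2+3N_3=\binom k2$. Step~3 needs no table of extremal $N_3$: since $4F=b(N_0+N_3)+a\bigl(\binom k2-N_3\bigr)$ and $3N_3\le\binom k2$, you get $4F\ge\tfrac23a\binom k2\ge4a>a+5b$ for $k\ge4$, and the cases $k\le3$ reduce to $2b<a<3b$, exactly as in the paper.
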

\begin{proof}
For the line law $R$, the unconditional pair mutual information is $a$, either conditional pair mutual information given the third bit is $b$, and $\TC(R)=2a+b$. If a line contains $j$ first-round points, its unavoidable dependence cost in the remaining two rounds is
\[
\begin{array}{c|cccc}j&0&1&2&3\\\hline
\text{cost}&b&0&a&2a+b.
\end{array}
\]
For $j=0$, three untouched coordinates must be placed into at most two further rounds; for $j\ge2$, the first round already discards dependence. Conditional on $z$, distinct lines of direction $z$ are independent, so observations in other lines or blocks do not change the conditional law of a line once its own observed bits are fixed. Since $F(C)$ is the average over $z$ of the costs of the lines of direction $z$, applying \eqref{eq:tree} under each $P_z^{\otimes r}$ gives the displayed bound.

At cardinalities $0,1,2$, $F$ equals $3b$, $2b$ and $(a+5b)/4=\gamma$. At cardinality three it equals $(2a+3b)/4$ for collinear points and $(3a+3b)/4$ otherwise. These values other than $\gamma$ exceed $\gamma$ since $2b<a<3b$. If $t=|C|\ge4$, every pair of points of $C$ lies on a unique line, so $N_2+3N_3=\binom t2$ and $F(C)\ge\frac a4(N_2+2N_3)\ge\frac a6\binom t2\ge a>\gamma$.
\end{proof}
Lemmas~\ref{lem:latent} and~\ref{lem:firstbatch} together give \eqref{eq:three-lower}.

\subsection{Fixed schedules identify the hidden component}\label{sec:posterior}
The upper bound rests on a general principle: a fixed schedule that factorizes one component exactly pays only for its uncertainty about the hidden label, and this uncertainty is controlled by the affinity of the first batch. Consider any finite mixture $p=\sum_z\nu_zP_z$ on $d$ coordinates, with $\nu_z>0$. For each $z$, assign a fixed schedule $s_z=(B_{z,1},\ldots,B_{z,T_z})$. Set $A_z=B_{z,1}$, $N_z=d-|A_z|$, and
\begin{align}
\Delta_z&=\sum_t\E_{P_z}\TC_{P_z}(X_{B_{z,t}}\mid X_{B_{z,<t}}),\\
 a_z&=\sum_{j\in A_z}\KL(P_z(X_j)\|p(X_j)),\\
 b_z&=\sum_{w\ne z}\sqrt{\frac{\nu_w}{\nu_z}}\,
 \Aff\bigl((P_z)_{A_z},(P_w)_{A_z}\bigr).
\end{align}
Let $q_z$ be the output of $s_z$ using the exact scalar marginals of $p$, so that every query integrates over the latent component.
\begin{lemma}[Error bound from component identification]\label{thm:posterior}
For the preceding schedules,
\begin{equation}\label{eq:posterior}
\KL\left(p\middle\|\sum_z\nu_zq_z\right)
\le \sum_z\nu_z\KL(P_z\|q_z)
\le \sum_z\nu_z(\Delta_z+a_z+N_zb_z).
\end{equation}
The component laws may have arbitrary dependence.
\end{lemma}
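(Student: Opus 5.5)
The first inequality in \eqref{eq:posterior} is joint convexity of KL: since $p=\sum_z\nu_zP_z$, we have $\KL(\sum_z\nu_zP_z\|\sum_z\nu_zq_z)\le\sum_z\nu_z\KL(P_z\|q_z)$. All the work is therefore in bounding $\KL(P_z\|q_z)$ for a single fixed $z$.

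\textbf{Decomposition by the tree identity.} Fix $z$ and write $s_z=(B_1,\dots,B_T)$. The schedule $s_z$ is a deterministic (non-adaptive) tree whose kernels are the scalar conditionals $K_{j,h}=p(X_j\mid h)$ of the mixture. Applying \eqref{eq:tree} with reference distribution $P=P_z$ gives
\[
\KL(P_z\|q_z)=\Delta_z+\sum_t\E_{P_z}\sum_{j\in B_t}\KL\bigl(P_z(X_j\mid h_t)\,\|\,p(X_j\mid h_t)\bigr).
\]
In the first round the history is empty, so the kernel error is exactly $a_z$. It remains to show that every later coordinate $j\notin A_z$ contributes at most $b_z$. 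There are $N_z$ such coordinates, so this would give $\KL(P_z\|q_z)\le\Delta_z+a_z+N_zb_z$.

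\textbf{Later rounds via the posterior.} Fix a later history $h$ that contains $x_{A_z}$. The mixture conditional is a posterior average,
\[
p(X_j\mid h)=\sum_w\rho_w(h)\,P_w(X_j\mid h),\qquad \rho_w(h)\propto\nu_wP_w(h).
\]
Since the log is concave, $\log p(x_j\mid h)\ge\log\rho_z(h)+\log P_z(x_j\mid h)$, and hence
\[
\KL\bigl(P_z(X_j\mid h)\,\|\,p(X_j\mid h)\bigr)\le-\log\rho_z(h)=\log\Bigl(1+\sum_{w\ne z}\frac{\nu_wP_w(h)}{\nu_zP_z(h)}\Bigr).
\]
Take the expectation under $P_z$ and apply Jensen: $\E\log(1+Y)\le\log(1+\E Y)\le\E Y$. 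For a fixed $w$, the quantity $\E_{P_z}\bigl[P_w(h)/P_z(h)\bigr]$ can be as large as $1$, so this direct route is too weak. The fix is to use only the first batch and a square root.

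\textbf{The main obstacle: getting the affinity.} The step I expect to be hardest is reaching $\sqrt{\nu_w/\nu_z}\,\Aff$ rather than a ratio expectation. Two ingredients handle it.

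First, pass from $h$ to $x_{A_z}$. The posterior weight $\rho_z(h)$ depends on the whole history, but by the chain rule for KL (conditioning on more variables under the same joint inflates nothing in the right direction), one can bound the expected kernel error for coordinate $j$ by the KL divergence between $P_z$ and the mixture given $x_{A_z}$ alone. Concretely, I would compare against the mixture whose weights are the posterior given only $x_{A_z}$, in the form
\[
\E_{P_z}\KL\bigl(P_z(X_j\mid h)\|p(X_j\mid h)\bigr)\le\E_{P_z}\bigl[-\log\rho_z(x_{A_z})\bigr].
\]
I expect this to follow from a data-processing argument on the joint law of $(Z,X)$, but it is the delicate point.

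Second, bound the first-batch posterior term with the inequality $\log(1+y)\le\sqrt y$ (valid for $y\ge0$) together with $\sqrt{\sum_w y_w}\le\sum_w\sqrt{y_w}$:
\[
\E_{P_z}\bigl[-\log\rho_z(x_{A_z})\bigr]\le\sum_{w\ne z}\sqrt{\frac{\nu_w}{\nu_z}}\;\E_{P_z}\sqrt{\frac{P_w(x_{A_z})}{P_z(x_{A_z})}}=\sum_{w\ne z}\sqrt{\frac{\nu_w}{\nu_z}}\,\Aff\bigl((P_z)_{A_z},(P_w)_{A_z}\bigr)=b_z.
\]
Summing this bound over the $N_z$ later coordinates completes the second inequality in \eqref{eq:posterior}. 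Nothing in the argument uses independence within a component, which is why the component laws may have arbitrary dependence.
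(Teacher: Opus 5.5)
Your skeleton matches the paper's proof. It has the tree identity under $P_z$, the first-round term equal to $a_z$, the pointwise bound $\KL(P_z(X_j\mid h)\|p(X_j\mid h))\le-\log\rho_z(h)$, and a square-root bound on the first-batch posterior. Two small remarks. The pointwise bound comes from $p(x_j\mid h)\ge\rho_z(h)P_z(x_j\mid h)$, i.e.\ from dropping nonnegative terms, not from concavity of the logarithm. Your $\log(1+y)\le\sqrt y$ followed by $\sqrt{\sum_w y_w}\le\sum_w\sqrt{y_w}$ is a valid substitute for the paper's $\log(1+\sum_i u_i)\le\sum_i\log(1+u_i)\le\sum_i\sqrt{u_i}$.

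The gap is the step you flag yourself. The reduction $\E_{P_z}[-\log\rho_z(h_t)]\le\E_{P_z}[-\log\rho_z(X_{A_z})]$ is asserted but not proved, and the concrete route you propose does not give it. Comparing with the reweighted mixture $\sum_w\rho_w(x_{A_z})P_w(X_j\mid h)$ bounds the divergence to a kernel the sampler never uses. The actual kernel $p(X_j\mid h)$ carries weights $\rho_w(h)$, and neither divergence controls the other. No pointwise version can hold either: on atypical histories the posterior of the true component can fall below its value at $x_{A_z}$. The inequality is true only on average under $P_z$.

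The missing idea is that the expected negative log posterior is a divergence between marginals. Because the schedule is fixed, for $t\ge2$ we have $h_t=X_H$ for the deterministic set $H=B_{z,<t}\supseteq A_z$. Since $\rho_z(X_H)=\nu_zP_z(X_H)/p(X_H)$,
\[
\E_{P_z}[-\log\rho_z(X_H)]=-\log\nu_z-\KL\bigl((P_z)_H\,\|\,p_H\bigr),
\]
where $p_H$ denotes the marginal of $p$ on $H$. Marginalizing from $H$ to $A_z$ can only decrease the divergence, which gives the required monotonicity. Equivalently, by the chain rule, the two expectations differ by
$\E_{(P_z)_{A_z}}\KL\bigl(P_z(X_{H\setminus A_z}\mid X_{A_z})\,\|\,p(X_{H\setminus A_z}\mid X_{A_z})\bigr)\ge0$. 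This is exactly the Bayes-rule identity for nested prefixes that the paper uses.

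So your instinct (chain rule or data processing) was right. It must be applied to the $X$-marginals after rewriting $-\log\rho_z$ as a log-likelihood ratio, not to a reweighted kernel. With this step inserted, your proof is complete and coincides with the paper's.
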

\begin{proof}
The first inequality is KL data processing from joint laws with the same latent prior and conditional laws $P_z,q_z$. For a history $h$ along $s_z$, let $\theta_z(h)=\Pr_p(Z=z\mid h)$. Every scalar mixture conditional obeys
\[
p(X_j=u\mid h)\ge\theta_z(h)P_z(X_j=u\mid h).
\]
Hence the scalar kernel divergence in \eqref{eq:tree} is at most $-\log\theta_z(h)$, and first-round scalar errors are exactly $a_z$. For nested observed prefixes $A\subseteq H$, Bayes' rule gives
\begin{equation}\label{eq:posterior-monotone}
\E_{P_z}[-\log\theta_z(H)]
=\E_{P_z}[-\log\theta_z(A)]
-\E_{(P_z)_A}\KL(P_z(X_{H\setminus A}\mid X_A)\|p(X_{H\setminus A}\mid X_A)),
\end{equation}
so the expected posterior log loss never increases. At the first prefix,
\[
-\log\theta_z(A_z)=\log\left(1+\sum_{w\ne z}\frac{\nu_w(P_w)_{A_z}(X_{A_z})}{\nu_z(P_z)_{A_z}(X_{A_z})}\right).
\]
The inequalities $\log(1+\sum_i u_i)\le\sum_i\log(1+u_i)\le\sum_i\sqrt{u_i}$ bound its expectation by $b_z$. Summing over the $N_z$ later coordinates proves \eqref{eq:posterior}.
\end{proof}

\subsection{Four schedules, one per component}\label{sec:four}
For each $s\in\mathcal D$, choose $v_s=(1,0)$ when $s=(0,1)$ and $v_s=(0,1)$ otherwise, and set
\[
C_s=\{tv_s+t^2s:t\in\F_3\},\qquad D_s=C_s+s,\qquad E_s=C_s+2s.
\]
\begin{lemma}[Parabolic slices]\label{lem:slices}
For each $s\in\mathcal D$, the slices $C_s,D_s,E_s$ partition $V$, and every line of direction $s$ meets each slice exactly once. For every $w\in\mathcal D\setminus\{s\}$, the three lines of direction $w$ meet $C_s$ in $2$, $1$ and $0$ points.
\end{lemma}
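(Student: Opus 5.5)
The proof will use a single coordinate system adapted to $s$ and reduce all three claims to counting solutions of a quadratic equation over $\F_3$.

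\emph{Choice of basis.} For every $s\in\mathcal D$, the pair $\{v_s,s\}$ is a basis of $V=\F_3^2$. For $s=(0,1)$ we have $v_s=(1,0)$. For $s\in\{(1,0),(1,1),(1,2)\}$ we have $v_s=(0,1)$, and the second coordinate of $s$ makes the determinant nonzero. So every $x\in V$ has a unique representation $x=tv_s+us$ with $t,u\in\F_3$, and
\[
C_s+ks=\{tv_s+(t^2+k)s:t\in\F_3\}.
\]

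\emph{Partition and the $s$-lines.} Given $x=tv_s+us$, set $k=u-t^2$. Then $k$ is the unique shift with $x\in C_s+ks$. Hence $C_s$, $D_s=C_s+s$ and $E_s=C_s+2s$ partition $V$, and each has three points.

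A line of direction $s$ has the form $\{t_0v_s+us:u\in\F_3\}$ for a fixed $t_0$. It meets $C_s+ks$ exactly at $u=t_0^2+k$, so it meets each slice in exactly one point.

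\emph{Lines of another direction $w$.} Take $w\in\mathcal D\setminus\{s\}$. The four directions in $\mathcal D$ are pairwise non-parallel; they are the four points of the projective line over $\F_3$. So $w=\alpha v_s+\beta s$ with $\alpha\neq0$.

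The linear functional $\varphi(tv_s+us)=\beta t-\alpha u$ vanishes on $w$ and is nonzero. Its three level sets $\{\varphi=c\}$, for $c\in\F_3$, are therefore exactly the three lines of direction $w$.

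A point $tv_s+t^2s$ of $C_s$ lies on the line $\{\varphi=c\}$ if and only if $f(t):=\beta t-\alpha t^2=c$. The number of points of $C_s$ on the three $w$-lines is thus the fiber profile of the map $f:\F_3\to\F_3$.

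\emph{Fiber profile of $f$.} Since $\operatorname{char}\F_3\neq2$, we can complete the square:
\[
f(t)=-\alpha\,(t-t_0)^2+f(t_0),\qquad t_0=\beta/(2\alpha).
\]
As $t$ ranges over $\F_3$, the quantity $(t-t_0)^2$ takes the value $0$ once and the value $1$ twice. Because $\alpha\neq0$, the map $y\mapsto -\alpha y+f(t_0)$ is injective. So $f$ takes one value twice, another value once, and misses the third. The three $w$-lines therefore meet $C_s$ in $2$, $1$ and $0$ points, as claimed.

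\emph{Expected obstacle.} There is no real obstacle. The only points needing care are two checks. First, $v_s$ is never parallel to $s$; this is exactly why $v_s$ must switch at $s=(0,1)$. Second, the coefficient $\alpha$ of $v_s$ in $w$ is nonzero, so the quadratic is genuinely nondegenerate. As a sanity check, one can also enumerate the case $s=(0,1)$ directly against Figure~\ref{fig:construction}(b).
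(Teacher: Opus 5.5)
Your proposal is correct and follows essentially the paper's route. The paper also treats lines as level sets of a linear functional vanishing on the direction. On $C_s$ that functional is linear, hence bijective, in $t$ for direction $s$. For $w\neq s$ it is a quadratic in $t$ with nonzero leading coefficient $\varphi(s)=-\alpha$. Your coordinates in the basis $\{v_s,s\}$ and the completed square just spell out the paper's terse final step.

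One small slip: for $v_s=(0,1)$ the determinant of $(v_s,s)$ is $-1$ because the first coordinate of $s$ equals $1$. It is not the second coordinate, which is $0$ when $s=(1,0)$.
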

\begin{proof}
The lines of direction $u$ are the level sets of a nonzero linear functional $\lambda_u$ with $\lambda_u(u)=0$. As $v_s$ is not parallel to $s$, $t\mapsto\lambda_s(tv_s+t^2s)=t\lambda_s(v_s)$ is a bijection of $\F_3$, so every $s$-line meets $C_s$ once; translation by $s$ preserves $s$-lines, so the three translates partition each $s$-line. For $w\ne s$, $\lambda_w(s)\ne0$, so $\lambda_w(tv_s+t^2s)$ is a quadratic in $t$ with nonzero leading coefficient; on $\F_3$ such a quadratic takes one value twice and another once.
\end{proof}
The schedule assigned to $s$ reveals $C_s^{1:r}$, then $D_s^{1:r}$, then $E_s^{1:r}$. Under $P_s^{\otimes r}$, the coordinates in each slice are conditionally independent given the previous slices, and the first slice consists of $3r$ independent fair bits; hence $\Delta_s=a_s=0$. Under every wrong component, the first slice of each block contains a pair with probability matrix
\[
\begin{pmatrix}3/8&1/8\\1/8&3/8\end{pmatrix}
\]
and one independent fair bit, whose affinity with the uniform law of the first slice is $\beta$. Lemma~\ref{thm:posterior}, with $\Delta_s=a_s=0$, $N_s=6r$ and $b_s\le3\beta^r$, proves \eqref{eq:three-upper}. The bound already accounts for the residual uncertainty about the component in both later rounds.

\subsection{All four schedules are necessary}\label{sec:weights}
The four schedules approximate different components, so their weights matter. Fix the four schedules above, let $u_4$ be the uniform two-fair-bit rule, let $\pi$ be arbitrary selection probabilities, and write $q_{\pi,K,r}$ for the mixture with arbitrary scalar kernels; the subscript $\mathrm{exact}$ denotes the exact marginals of $p_r$.

\begin{proposition}[Schedule weights]\label{prop:weights}
For every $r\ge1$ and every strictly positive $\pi$, define
\[
\varepsilon_r(\pi)=\frac12\sum_z\log\left(1+\beta^r
                   \sum_{s\ne z}\sqrt{\pi_s/\pi_z}\right).
\]
Then
\begin{align}
\KL(u_4\|\pi)-\varepsilon_r(\pi)
&\le \inf_K\KL(p_r\|q_{\pi,K,r})\label{eq:weights-lower}\\
&\le\KL(p_r\|q_{\pi,\mathrm{exact},r})
 \le\KL(u_4\|\pi)+18r\beta^r.\label{eq:weights-upper}
\end{align}
If $\pi$ is supported on $m\le3$ of these schedules, then, for every choice of kernels,
\begin{equation}\label{eq:missing-weight}
\KL(p_r\|q_{\pi,K,r})\ge
\frac{4-m}{2}r\log(1/\beta)-\frac{4-m}{4}\log m-\log4.
\end{equation}
\end{proposition}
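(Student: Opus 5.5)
The plan has three parts, one per displayed inequality. All three rest on a single geometric fact, which I would isolate first as a claim. \emph{Claim:} for $z\ne s$ and every product law $Q$ on the coordinates $C_s^{1:r}$, $\Aff\bigl((P_z^{\otimes r})_{C_s^{1:r}},Q\bigr)\le\beta^r$. By Lemma~\ref{lem:slices}, under $P_z$ each block's first slice is a correlated pair with matrix $\bigl(\begin{smallmatrix}3/8&1/8\\1/8&3/8\end{smallmatrix}\bigr)$ plus an independent fair bit. Affinity tensorizes over blocks and is at most $1$ on the lone bit, so the claim reduces to a two-by-two maximization. One maximizes $\sum_{x,y}\sqrt{M_{xy}\,u_x v_y}$ over product laws $u\otimes v$, and I expect the symmetric point $u=v=$ uniform, with value $\beta$, to be optimal. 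This is the step I expect to be the main obstacle. To settle it I would write the objective as $\sqrt u^{\top}\sqrt M\sqrt v$ with unit vectors $\sqrt u,\sqrt v$, which is at most the largest singular value of $\sqrt M$. Since $\sqrt M$ is symmetric with eigenvectors $(1,\pm1)$, that singular value equals $\tfrac12(\sqrt{3/2}+\sqrt{1/2})=\beta$.

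Since affinity can only grow under marginalization (data processing for the Bhattacharyya coefficient), the claim extends to the whole output. For any kernels $K$, the first round of $q_{s,K}$ is a product law on $C_s^{1:r}$, so $\Aff(P_z^{\otimes r},q_{s,K})\le\beta^r$ for all $z\ne s$.

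\emph{Upper bound \eqref{eq:weights-upper}.} The middle inequality is trivial. For the last one I use data processing on joint laws, as in the first step of Lemma~\ref{thm:posterior}, but now with different latent priors. The first joint law is $(Z,X)$ with $Z\sim u_4$ and $X\mid Z=z\sim P_z^{\otimes r}$; the second has $Z\sim\pi$ and $X\mid Z=z\sim q_z$. The chain rule then gives
\[
\KL(p_r\|q_{\pi,\mathrm{exact},r})\le\KL(u_4\|\pi)+\tfrac14\textstyle\sum_z\KL(P_z^{\otimes r}\|q_z).
\]
Each term in the sum is at most $N_zb_z\le 6r\cdot3\beta^r$, by the second inequality of Lemma~\ref{thm:posterior} with $\Delta_z=a_z=0$, exactly as in the proof of \eqref{eq:three-upper}.

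\emph{Lower bound \eqref{eq:weights-lower}.} I write $\KL(p_r\|q)=\tfrac14\sum_z\E_{P_z^{\otimes r}}\log(p_r/q)$ and use $p_r\ge\tfrac14P_z^{\otimes r}$. I also factor $q=\pi_zq_z\bigl(1+\sum_{s\ne z}\pi_sq_s/(\pi_zq_z)\bigr)$. The $z$-th term is then at least
\[
\KL(P_z^{\otimes r}\|q_z)-\log(4\pi_z)-\E_{P_z^{\otimes r}}\log\Bigl(1+\textstyle\sum_{s\ne z}\pi_sq_s/(\pi_zq_z)\Bigr).
\]
After averaging, the second term gives exactly $\KL(u_4\|\pi)$. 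For the last term I apply $\log(1+U)\le2\log(1+\sqrt U)$ with subadditivity of the square root, and then Jensen's inequality. This bounds the term by $2\log\bigl(1+\sum_{s\ne z}\sqrt{\pi_s/\pi_z}\,\E_{P_z}\sqrt{q_s/q_z}\bigr)$, and the factor $\tfrac14\cdot2$ produces the $\tfrac12$ in $\varepsilon_r$. The remaining difficulty is that $\E_{P_z}\sqrt{q_s/q_z}$ is not yet an affinity. I would handle it with Cauchy--Schwarz against the slack term $\KL(P_z\|q_z)\ge0$, or by replacing $q_z$ with $P_z$ in the denominator: bound the $z$-th term below by $-\E_{P_z}\log\bigl(4\pi_z+4\sum_{s\ne z}\pi_sq_s/P_z\bigr)$ and split off the diagonal. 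Either way, the goal is to land on $\Aff(P_z^{\otimes r},q_s)\le\beta^r$ from the claim. Getting the constants to match $\varepsilon_r$ exactly is where I expect to spend effort.

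\emph{Missing schedules \eqref{eq:missing-weight}.} By Lemma~\ref{lem:latent}, $\KL(p_r\|q)\ge\tfrac14\sum_z\KL(P_z^{\otimes r}\|q)-\log4$. Components $z$ whose schedule is in the support of $\pi$ contribute at least $0$. For each of the $4-m$ missing components I use $\KL\ge-2\log\Aff$ together with $\Aff(P_z^{\otimes r},\sum_s\pi_sq_s)\le\sum_s\sqrt{\pi_s}\,\Aff(P_z^{\otimes r},q_s)\le\sqrt m\,\beta^r$, where the last step combines the claim with Cauchy--Schwarz over the $m$ support points. This yields $2r\log(1/\beta)-\log m$ per missing component, and weighting by $\tfrac14$ and summing gives \eqref{eq:missing-weight}.
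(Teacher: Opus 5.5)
Your claim, your proof of \eqref{eq:weights-upper} and your proof of \eqref{eq:missing-weight} are correct and match the paper's. The gap is in \eqref{eq:weights-lower}. You leave it unfinished, and neither of the repairs you suggest can finish it.

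\emph{The split route is too weak.} After you separate $\KL(P_z^{\otimes r}\|q_z)$ and apply $\log(1+U)\le2\log(1+\sqrt U)$, subadditivity and Jensen to the cross term alone, the lower bound you reach already fails for admissible kernels. Let schedule $z$ use the exact conditionals of $P_z$, except for a first-round kernel $\Ber(\delta)$ on one coordinate, which is a fair bit under $P_z$. Then $\KL(P_z^{\otimes r}\|q_z)\approx\tfrac12\log(1/\delta)$. Meanwhile $\E_{P_z}\sqrt{q_s/q_z}$ is of order $\delta^{-1/2}$ whenever $q_s$ has full support. The subtracted term $2\log(1+\cdots)$ is therefore about $\log(1/\delta)$, and your bound tends to $-\infty$. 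Yet $\KL(P_z^{\otimes r}\|q)\le\log(1/\pi_s)+\KL(P_z^{\otimes r}\|q_s)$ stays bounded. Since the proposition bounds $\inf_K$, such kernels are allowed, and no Cauchy--Schwarz manipulation of your intermediate expression can recover a bound of order $\varepsilon_r(\pi)$.

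\emph{The replacement step is not a valid inequality.} Replacing $\pi_zq_z/P_z$ by $\pi_z$ inside the logarithm fails because $q_z/P_z$ can exceed $1$ exactly where $\sum_{s\ne z}\pi_sq_s/P_z$ is small. Take two points with $P=(\tfrac12,\tfrac12)$, $q_z=(1,0)$, every other $q_s=(0,1)$ and $\pi$ uniform. Then $-\E_P\log(4q/P)=-\tfrac12\log12$, whereas your replaced expression equals $-\tfrac12\log7$, which is larger. So it is not a lower bound.

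\emph{The fix.} Apply the concavity bound to the whole mixture before discarding anything. By Jensen, $\KL(P_z^{\otimes r}\|q)\ge-2\log\Aff(P_z^{\otimes r},q)$. Pointwise subadditivity of the square root gives $\Aff(P_z^{\otimes r},q)\le\sum_s\sqrt{\pi_s}\,\Aff(P_z^{\otimes r},q_s)$. Now bound the diagonal affinity by $1$ and the others by $\beta^r$ using your claim, so that $\Aff(P_z^{\otimes r},q)\le\sqrt{\pi_z}\bigl(1+\beta^r\sum_{s\ne z}\sqrt{\pi_s/\pi_z}\bigr)$. Lemma~\ref{lem:latent} then gives
$\KL(p_r\|q)\ge-\tfrac14\sum_z\log\pi_z-\log4-\varepsilon_r(\pi)=\KL(u_4\|\pi)-\varepsilon_r(\pi)$.
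This is the same affinity argument you already use for \eqref{eq:missing-weight}. The only change is that the diagonal term is kept inside the affinity and bounded by $1$, rather than split off inside the logarithm.
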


Uniform weights therefore achieve exponentially small error, fixed unequal positive weights converge to the floor $\KL(u_4\|\pi)$ whether or not the kernels are reoptimized, and dropping any schedule restores a linear error. The key fact is that, for $s\ne z$, the first batch of schedule $s$ has affinity at most $\beta^r$ with component $z$ under every product kernel, and later draws cannot repair this mismatch. Appendix~\ref{app:weights} gives the proof.

\section{Every fixed number of rounds}\label{sec:depth}
\subsection{Two rounds}\label{sec:two}
A two-round policy makes no value-dependent choice after its first batch, since completion forces the second batch to be the complement. Let $\mathcal M_0=\{01,23\}$, $\mathcal M_1=\{02,13\}$ and $\mathcal M_2=\{03,12\}$ be the three perfect matchings on four vertices. For $0<\eta<1/2$, put on each edge the law
\[
R_\eta(00)=R_\eta(11)=(1-\eta)/2,\qquad
R_\eta(01)=R_\eta(10)=\eta/2.
\]
Conditional on a uniform latent $z\in\{0,1,2\}$, take $2r$ independent four-bit blocks with matching $\mathcal M_z$, and output only the $8r$ data bits. Denote the mixture by $p_{r,\eta}$ and put
\[
I_\eta=\log2-h_2(\eta),\qquad
\beta_\eta=\frac{\sqrt{2(1-\eta)}+\sqrt{2\eta}}2<1,
\]
where $h_2(t)=-t\log t-(1-t)\log(1-t)$ is the binary entropy in nats.

\begin{theorem}[Two-round separation]\label{thm:two}
Every deterministic two-round product sampler obeys
\begin{align}
\KL(p_{r,\eta}\|q)&\ge\tfrac43rI_\eta-\log3,\\
\TV(p_{r,\eta},q)&\ge\tfrac13-\tfrac1{\sqrt3}\beta_\eta^{4r/3}.
\end{align}
Three fixed schedules have uniform-mixture divergence at most $8r\beta_\eta^r$. At $r=1,\eta=1/4$, a uniform mixture of just two of these schedules, selected by one fair bit, satisfies
\begin{equation}\label{eq:eight}
\KL(p_{1,1/4}\|q_{\mathrm{bit}})<0.0767188675106
<0.0918463303897<\inf_{A,K:T\le2}\KL(p_{1,1/4}\|q_{A,K}).
\end{equation}
No randomization advantage is possible in one round when every component uses exact target marginals.
\end{theorem}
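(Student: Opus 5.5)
The plan follows the three-round template of Section~\ref{sec:three}, specialized to two rounds, where adaptivity disappears.

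\textbf{Lower bound.} A two-round policy is a single first batch $C$ (one $C_\ell$ in each of the $2r$ four-bit blocks) followed by its complement. Fix a component $z$ and look at an edge of $\mathcal M_z$. Under $P_z$ the edges are independent.
\begin{itemize}
\item If the edge is split between the two rounds, it costs nothing.
\item If both endpoints fall in the same round, it discards exactly $I_\eta=\TC(R_\eta)$.
\end{itemize}
In the second round, conditioning on the other edges' bits does not change this edge's law, so its conditional TC is still $I_\eta$.

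By \eqref{eq:tree} under each $P_z^{\otimes 2r}$, with kernel terms nonnegative, the averaged component divergence is at least $I_\eta\sum_\ell G(C_\ell)$. Here $G(C)$ is the average over $z$ of the number of $\mathcal M_z$-edges not split by $C$.

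Each of the six pairs of $\{0,1,2,3\}$ lies in exactly one matching. A set of size $k$ splits $k(4-k)$ pairs, so
\[
G(C)=\bigl(6-k(4-k)\bigr)/3 .
\]
This is minimized at $k=2$, giving $G=2/3$. Summing over $2r$ blocks gives $\tfrac43 rI_\eta$, and Lemma~\ref{lem:latent} with $H(\nu)=\log3$ finishes the KL bound.

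\textbf{TV lower bound.} For TV I would argue via the affinity. Take a block with $|C_\ell|=2$; its $C_\ell$ is an edge of exactly one matching $z_\ell$. Any product law on that pair has affinity with $R_\eta$ at most $\beta_\eta$, because $\Aff(R_\eta,\text{product})\le\max$ over product laws, which is attained at the uniform law and equals $\beta_\eta$.

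At least $4r/3$ block-contributions of "wrong" structure accrue to some component on average. The plan is then:
\begin{itemize}
\item Pick the component $z$ hit most often, using the same counting, and bound $\Aff(P_z^{\otimes},q)\le\beta_\eta^{4r/3}$ by multiplicativity over independent blocks.
\item Use $\TV\ge1-\Aff$ on that component.
\item Transfer to the mixture through $\TV(p,q)\ge\nu_z\TV(P_z,q)$-type inequalities with weight $1/3$, costing the $1/\sqrt3$ factor (analogous to the proof of \eqref{eq:three-tv}).
\end{itemize}
I expect this bookkeeping, in particular handling blocks with $|C_\ell|\ne2$ uniformly, to be the main obstacle.

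\textbf{Upper bound.} For each $z$, take the schedule that puts one endpoint of every $\mathcal M_z$-edge in round one. Under $P_z$ this factorizes, and the first batch consists of fair bits, so $\Delta_z=a_z=0$.

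Under $w\ne z$, the first batch of each block is an $\mathcal M_w$-edge pair, whose affinity with uniform is $\beta_\eta$. The $2r$ blocks thus give an affinity factor of $\beta_\eta^{2r}$ or better; I would use at least $\beta_\eta^r$.

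Lemma~\ref{thm:posterior} with $N_z=4r$ and $b_z\le2\beta_\eta^r$ then yields $8r\beta_\eta^r$.

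\textbf{Certificate and one-round claim.} The certificate \eqref{eq:eight} is exact rational evaluation of finitely many first batches, as in Appendix~\ref{app:cert}.

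For one round, every schedule is the trivial partition, so each component outputs the same product of exact marginals. The mixture therefore equals the deterministic sampler, and no advantage is possible.
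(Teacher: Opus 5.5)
Your KL lower bound (the count $G(C)=(6-k(4-k))/3$, minimized at $k=2$), the certificate and the one-round argument match the paper. The upper bound, however, has a genuine gap. You claim that under every $w\ne z$ the first batch of each block is an $\mathcal M_w$-edge pair, and this is false. As you note yourself in the lower bound, each of the six pairs lies in exactly one matching. A pair taking one endpoint from each $\mathcal M_z$-edge is therefore an edge of exactly one of the two wrong matchings $w_1,w_2$. Under the other wrong matching it consists of two independent fair bits, which is the same law it has under $P_z$.

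Consequently $\beta_\eta^{2r}$ is never attainable. A schedule that merely ``puts one endpoint of every $\mathcal M_z$-edge in round one'' need not reach $\beta_\eta^r$ either. For example, if for $z=0$ every block reveals $\{0,2\}$ (an edge of $\mathcal M_1$), then component $2$, with $\mathcal M_2=\{03,12\}$, gives the first batch exactly the uniform law it has under component $0$. Its term in $b_z$ is then $1$, and Lemma~\ref{thm:posterior} yields nothing better than a linear bound. The round-two mixture conditionals also average the pairings of components $0$ and $2$.

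The missing step is to balance the exposure, as the paper does: reveal an edge of $w_1$ in $r$ blocks and an edge of $w_2$ in the other $r$ blocks. Both kinds of pair cross $\mathcal M_z$, so $\Delta_z=a_z=0$ survives. Each wrong component now sees $r$ correlated pairs (affinity $\beta_\eta$ each) and $r$ uniform pairs (affinity $1$). This is where $b_z=2\beta_\eta^r$, and the exponent $r$ rather than $2r$, come from.

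Your TV sketch also needs two repairs. First, the transfer inequality $\TV(p,q)\ge\nu_z\TV(P_z,q)$ is false (take $q=p$). The factor $1/\sqrt3$ comes from \eqref{eq:tv-mixture}, namely $\TV(p,q)\ge\nu-\sqrt\nu\,\Aff(P_z,q)$, which uses the affinity directly rather than passing through $\TV(P_z,q)\ge1-\Aff$.

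Second, $q$ is not a product over blocks, because second-round kernels may depend on the whole first-round history. So ``multiplicativity over independent blocks'' must be replaced by a node-by-node bound. The second batch is the fixed complement, and for every first-round outcome its conditional affinity under $P_z$ is at most $\beta_\eta^{c_2}$, where $c_2$ counts unsplit $\mathcal M_z$-edges in the complement. This holds because each such edge still has law $R_\eta$ given the history, while the sampler draws its endpoints independently.

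This also removes your worry about $|C_\ell|\ne2$. Every unsplit edge, in either round, contributes a factor at most $\beta_\eta$, and the count $6-k(4-k)\ge2$ per block, summed over matchings, holds for every size. Hence some matching has at least $4r/3$ unsplit edges, and $\Aff(P_z,q)\le\beta_\eta^{4r/3}$.
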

The proof and explicit schedules are in Appendix~\ref{app:two}.

\subsection{Every fixed depth}
Fix $T\ge2$. Let $\mathcal Z$ be the unordered partitions of $[2T]$ into two sets of size $T$, and let $K=\frac12\binom{2T}{T}$. For $0<\epsilon<1/2$, define an exchangeable $T$-bit law $R_T$ by drawing a hidden fair bit and passing it through $T$ independent binary symmetric channels with crossover $\epsilon$. Under component $z$, put independent $R_T$ laws on its two parts. Repeat this $2T$-coordinate block $r$ times with the same uniform latent partition, and call the marginal data law $p_{T,r}$. Let $a_T$ be the minimum mutual information of two unobserved $R_T$ coordinates, conditional on any assignment to any subset of the other coordinates. Then $a_T>0$: the posterior of the hidden bit stays strictly between zero and one, the conditional covariance of two remaining noisy copies is strictly positive, and there are finitely many conditioning assignments. Write $\eta=2\epsilon(1-\epsilon)$ and $\beta=\beta_\eta$.

\begin{theorem}[Every fixed depth]\label{thm:allT}
For every fixed $T\ge2$ and every $r\ge1$,
\begin{align}
\inf_{A,K':\,\text{rounds}\le T}\KL(p_{T,r}\|q_{A,K'})
&\ge\frac{2(T-1)}{2T-1}a_Tr-\log K,\label{eq:allTlower}\\
\KL(p_{T,r}\|q_\mix)
&\le2(T-1)r(K-1)\beta^{\lfloor r/(K-1)\rfloor}.\label{eq:allTupper}
\end{align}
The mixture uses $K$ fixed $T$-round schedules. Thus, for each fixed budget, deterministic error is at least linear in dimension whereas mixture error vanishes exponentially.
\end{theorem}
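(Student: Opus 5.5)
The plan is to prove the two bounds separately: the lower bound \eqref{eq:allTlower} by the reduction used for Theorem~\ref{thm:three}, and the upper bound \eqref{eq:allTupper} by Lemma~\ref{thm:posterior} with one schedule per partition.

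\emph{Lower bound.} By Lemma~\ref{lem:latent} with $K$ equally likely components, it suffices to show
\[
\frac1K\sum_{z\in\mathcal Z}\KL(P_z^{\otimes r}\|q_{A,K'})\ge\frac{2(T-1)}{2T-1}a_Tr,
\]
and then subtract $H(\nu)=\log K$. I apply \eqref{eq:tree} under each $P_z^{\otimes r}$, drop the kernel terms, and bound the total-correlation terms block by block.

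For a batch $B$ at history $h$, split $B$ by block and by the parts of $z$. Given $z$, distinct parts and distinct blocks are independent, so $\TC_{P_z}$ splits into a sum over parts. The chain rule then gives
\[
\TC_{P_z}(X_{S}\mid h)\ge(|S|-1)_+\,a_T
\]
for the set $S$ of batch coordinates falling in one part. This uses the fact that $a_T$ is a minimum over all conditioning assignments.

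Adaptivity is the difficulty: only the first batch is independent of $z$. I therefore use two facts in each block.
\begin{enumerate}
\item The first-round batch meets the block in a deterministic set of size $m$. Its expected cost, averaged over the uniform partition $z$, is $a_T$ times $E[\sum_{\text{parts}}(m_{\text{part}}-1)_+]$.
\item Along every trajectory and for every $z$, a part with $k$ unrevealed coordinates after round one must pay at least $(k-(T-1))_+a_T$ in the remaining $T-1$ rounds. This is pigeonhole, and it is valid whatever the adaptive choices.
\end{enumerate}

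The proof then reduces to a case check on $m$.
\begin{itemize}
\item $m=0$: both parts have $T$ unrevealed coordinates, so the later cost is $2a_T$.
\item $m=1$: one part still has $T$ unrevealed coordinates, so the later cost is $a_T$.
\item $m=2$: the pair lies in a common part with probability $(T-1)/(2T-1)$. In that case the first round pays $a_T$, and the untouched part of size $T$ pays another $a_T$ later. Otherwise the first round and the later rounds are both free. The total is $\tfrac{2(T-1)}{2T-1}a_T$.
\item $m\ge3$: the first-round expectation alone is $m-2+\Pr(\text{all in one part})$, and this already exceeds $\tfrac{2(T-1)}{2T-1}$, because $m-2\ge1$ and $2(T-1)/(2T-1)<1$.
\end{itemize}
The minimum over $m$ is attained at $m=2$, giving the constant in \eqref{eq:allTlower}. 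Summing over the $r$ blocks completes the lower bound.

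\emph{Upper bound.} For each partition $z=\{P,Q\}$, take the fixed schedule that in round $t$ reveals, in every block, one coordinate of $P$ and one of $Q$. Under $P_z^{\otimes r}$ each round's batch contains at most one coordinate per independent $R_T$ part, so $\Delta_z=0$. The first batch consists of independent fair bits, as do the marginals of $p$, so $a_z=0$. This leaves $N_z=2(T-1)r$ later coordinates.

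The pairing inside each block is chosen so that every wrong component is exposed. Assign each block to one of the $K-1$ partitions $w\ne z$ in round-robin fashion. In a block assigned to $w$, pick the first-round pair $i\in P$, $j\in Q$ with $i,j$ in the same part of $w$; such a pair exists because $w\ne z$. Under $P_w$ this pair is two noisy copies of a common bit, so it has crossover $\eta=2\epsilon(1-\epsilon)$ and affinity $\beta_\eta$ with the uniform pair. Blocks assigned elsewhere contribute affinity at most $1$.

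Affinity multiplies across independent blocks, so each $w$ satisfies
\[
\Aff\le\beta^{\lfloor r/(K-1)\rfloor},\qquad b_z\le(K-1)\beta^{\lfloor r/(K-1)\rfloor}.
\]
Lemma~\ref{thm:posterior} with uniform $\nu$ then gives \eqref{eq:allTupper}.

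The main obstacle is the lower bound. The case analysis must use only first-round randomness over $z$ together with trajectory-wise pigeonhole for later rounds, since later batches may correlate with $z$ through the observed values.
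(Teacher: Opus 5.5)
Your proposal is correct and follows essentially the same route as the paper. For the lower bound, your first-round cost $(m_{\mathrm{part}}-1)_+a_T$ plus the trajectory-wise pigeonhole cost amounts to the paper's count of parts whose first-round occupancy differs from one, minimized at $|C|=2$ and combined with Lemma~\ref{lem:latent}. For the upper bound, you use the same $K$ schedules: the first pair crosses the designated partition but lies inside a round-robin wrong partition, and later rounds take one coordinate per part, which is then fed into Lemma~\ref{thm:posterior} with $N_z=2(T-1)r$.
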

The proof is in Appendix~\ref{app:depth}.

\section{Robustness}\label{sec:oracle}
The randomized construction does not rely on exact conditional probabilities. For binary outputs, its error is controlled by the expected squared difference between predicted and exact logits.

\begin{lemma}[Bernoulli logit perturbation]\label{lem:logit}
For $u\in[0,1]$, $v=\sigmoid(b)$ and $\widehat v=\sigmoid(b+e)$,
\begin{equation}\label{eq:logit}
\KL(\Ber(u)\|\Ber(\widehat v))
\le2\KL(\Ber(u)\|\Ber(v))+\tfrac14 e^2.
\end{equation}
\end{lemma}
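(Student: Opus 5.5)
The plan is to parametrize the second argument by its logit and use smoothness in that variable. Fix $u\in[0,1]$ and define, for $c\in\mathbb{R}$,
\[
f(c)=\KL\bigl(\Ber(u)\|\Ber(\sigmoid(c))\bigr)=-u\log\sigmoid(c)-(1-u)\log\bigl(1-\sigmoid(c)\bigr)-H(u),
\]
where $H(u)$ is the binary entropy (zero at $u\in\{0,1\}$, so the endpoint cases need no separate treatment). The bound \eqref{eq:logit} is the statement $f(b+e)\le2f(b)+\tfrac14e^2$.

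\emph{Step 1 (derivatives in the logit).} Using $\tfrac{d}{dc}\log\sigmoid(c)=1-\sigmoid(c)$ and $\tfrac{d}{dc}\log(1-\sigmoid(c))=-\sigmoid(c)$, I will compute $f'(c)=\sigmoid(c)-u$ and $f''(c)=\sigmoid(c)(1-\sigmoid(c))\le\tfrac14$. Taylor's theorem with the Lagrange remainder then gives
\[
f(b+e)\le f(b)+(v-u)e+\tfrac18e^2 .
\]

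\emph{Step 2 (absorb the linear term).} By the weighted AM--GM inequality $xy\le2x^2+\tfrac18y^2$, the linear term satisfies $(v-u)e\le2(v-u)^2+\tfrac18e^2$. Pinsker's inequality for Bernoulli laws gives $\KL(\Ber(u)\|\Ber(v))\ge2\,\TV^2=2(u-v)^2$, so $2(v-u)^2\le f(b)$.

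\emph{Step 3 (combine).} Substituting Step 2 into Step 1 gives $f(b+e)\le f(b)+f(b)+\tfrac18e^2+\tfrac18e^2=2f(b)+\tfrac14e^2$, which is \eqref{eq:logit}.

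I do not expect a real obstacle. The one delicate point is choosing the AM--GM weights so that the cross term costs exactly one extra copy of $f(b)$ via Pinsker and contributes $e^2/8$, matching the $e^2/8$ from the curvature bound $f''\le1/4$. These two contributions are what produce the constants $2$ and $\tfrac14$ in the statement.
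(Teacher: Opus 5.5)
Your proposal is correct and matches the paper's proof step for step. The paper also bounds the second-order remainder of the log partition $\log(1+e^b)$ via $A''\le 1/4$ to get $\KL(u\|\widehat v)\le \KL(u\|v)+(v-u)e+e^2/8$, then absorbs the linear term with Pinsker and the same weighted AM--GM, $\sqrt{D/2}\,|e|\le D+e^2/8$.
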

The proof is in Appendix~\ref{app:logit}. For the schedule assigned to $z$, define the target-history error
\begin{equation}\label{eq:E2}
\mathcal E_2=\sum_z\nu_z\E_{X\sim P_z}
\sum_{t,j\in B_{z,t}}
\left[\logit K_{z,j,h_t}(1)-\logit p(X_j=1\mid h_t)\right]^2,
\end{equation}
where histories are obtained by revealing target samples along the assigned schedule and all probabilities lie strictly between zero and one.

\begin{corollary}[Oracle-error budget]\label{cor:oracle}
Suppose that Lemma~\ref{thm:posterior} gives $\sum_z\nu_z\KL(P_z\|q_z)\le B$, an average over components of the divergence between each component and the output of its own schedule. Then the same mixture with approximate kernels satisfies
\begin{equation}\label{eq:robust}
\KL(p\|q_{\mix,K})\le2B+\mathcal E_2/4.
\end{equation}
For the three-round family, if $\mathcal E_2\le9r\delta^2$, then
\begin{equation}\label{eq:robust3}
\KL(p_r\|q_{\mix,K})\le36r\beta^r+\tfrac94r\delta^2.
\end{equation}
In particular, any fixed $\delta^2<4\gamma/9$ preserves an asymptotic strict advantage over every deterministic three-round product sampler.
\end{corollary}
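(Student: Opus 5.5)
The approach is to bound the approximate mixture componentwise and then compare each component's kernel errors with the exact-kernel errors through Lemma~\ref{lem:logit}. Write $q_{z,K}$ for the output of schedule $s_z$ run with the approximate kernels $K_{z,j,h}$. The first inequality of Lemma~\ref{thm:posterior} used only data processing from the joint laws $(Z,X)$. It does not use the kernels, so the same argument gives
\[
\KL(p\|q_{\mix,K})\le\sum_z\nu_z\KL(P_z\|q_{z,K}).
\]
It therefore suffices to bound each $\KL(P_z\|q_{z,K})$.

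For fixed $z$, I apply the reveal-tree identity \eqref{eq:tree} with reference $P=P_z$ and the fixed schedule $s_z$:
\[
\KL(P_z\|q_{z,K})=\Delta_z+\E_{P_z}\sum_t\sum_{j\in B_{z,t}}\KL\bigl(P_z(X_j\mid h_t)\,\|\,K_{z,j,h_t}\bigr).
\]
Each scalar term is a divergence between Bernoulli laws. I invoke Lemma~\ref{lem:logit} with three choices:
\begin{itemize}
\item $u=P_z(X_j=1\mid h_t)$, the reference conditional;
\item $v=p(X_j=1\mid h_t)$, the exact mixture conditional, with $b=\logit v$;
\item $e=\logit K_{z,j,h_t}(1)-\logit p(X_j=1\mid h_t)$, the predictor's logit error.
\end{itemize}
This is legitimate because all probabilities lie strictly between zero and one. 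Each term is then at most $2\KL(P_z(X_j\mid h_t)\|p(X_j\mid h_t))+\tfrac14e^2$. The histories in \eqref{eq:E2} are target samples from $P_z$ revealed along $s_z$, which is exactly the expectation in \eqref{eq:tree}. Summing, weighting by $\nu_z$, and applying \eqref{eq:tree} again with the exact kernels gives
\[
\sum_z\nu_z\KL(P_z\|q_{z,K})\le\sum_z\nu_z\Bigl[\Delta_z+2\bigl(\KL(P_z\|q_z)-\Delta_z\bigr)\Bigr]+\tfrac14\mathcal E_2
\le2\sum_z\nu_z\KL(P_z\|q_z)+\tfrac14\mathcal E_2,
\]
where the last step uses $\Delta_z\ge0$. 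The hypothesis $\sum_z\nu_z\KL(P_z\|q_z)\le B$ then yields \eqref{eq:robust}.

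The one point that needs care is the factor $2$. It must multiply only the exact-kernel scalar errors, not the total-correlation terms. That is harmless because $\Delta_z\ge0$, and the bound in fact remains valid when $2$ is applied to the whole of $\KL(P_z\|q_z)$. No other step looks delicate.

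For the three-round family, Section~\ref{sec:four} gives $\Delta_s=a_s=0$, $N_s=6r$ and $b_s\le3\beta^r$. Lemma~\ref{thm:posterior} then gives $B=18r\beta^r$, and substituting $\mathcal E_2\le9r\delta^2$ yields \eqref{eq:robust3}. The deterministic lower bound \eqref{eq:three-lower} already allows arbitrary product kernels, so the comparison is
\[
\gamma r-\log4\quad\text{versus}\quad\tfrac94\delta^2r+36r\beta^r.
\]
Since $\beta<1$, the term $36r\beta^r$ tends to zero. Hence whenever $\tfrac94\delta^2<\gamma$, that is $\delta^2<4\gamma/9$, the difference grows linearly in $r$ and the randomized sampler is strictly better for all large $r$.
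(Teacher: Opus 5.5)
Your proposal is correct and follows the paper's proof essentially step for step: mixture data processing reduces the problem to components, then the reveal-tree identity \eqref{eq:tree} under each $P_z$ with Lemma~\ref{lem:logit} applied termwise ($v$ the exact mixture conditional) gives $2\KL(P_z\|q_z)-\Delta_z$ plus a quarter of the squared logit errors, the $-\Delta_z$ is dropped, and $B=18r\beta^r$ and $\mathcal E_2\le9r\delta^2$ are substituted. Your final comparison with \eqref{eq:three-lower}, which already allows arbitrary product kernels, is also the paper's argument.
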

\begin{proof}
Apply Lemma~\ref{lem:logit} to each scalar kernel term in the component version of \eqref{eq:tree}. If the conditional total-correlation sum is $\Delta_z$, the result is at most $2\KL(P_z\|q_z)-\Delta_z$ plus one quarter of the squared-error sum. Drop $-\Delta_z$, average over $z$ and use mixture data processing. The deterministic lower bound allows arbitrary product kernels and is unchanged.
\end{proof}

\paragraph{Perturbed components.}
Equal scalar marginals and exact within-component factorization drive the exponential bound, and a strict advantage persists when they hold only approximately. For $0\le\epsilon<1/4096$, put
\begin{align}
\omega(\epsilon)&=18\epsilon\log2+10h_2(\epsilon), &
\alpha(\epsilon)&=\frac{12\epsilon^2}{1/4-\epsilon^2},&
\rho(\epsilon)&=\beta+2\sqrt{2\epsilon}.
\end{align}
\begin{proposition}[Stability under component perturbations]\label{prop:perturb}
Let $\widetilde P_z$ be nine-coordinate laws with
$\max_z\TV(\widetilde P_z,P_z)\le\epsilon$, assume $\rho(\epsilon)<1$, and set
$\widetilde p_r=\frac14\sum_z\widetilde P_z^{\otimes r}$, a full-support law.
Every deterministic three-round product sampler satisfies
\begin{equation}\label{eq:perturb-lower}
\KL(\widetilde p_r\|q_{A,K})\ge r[\gamma-\omega(\epsilon)]-\log4.
\end{equation}
The uniform mixture of the four schedules, using exact marginals of $\widetilde p_r$, satisfies
\begin{equation}\label{eq:perturb-upper}
\KL(\widetilde p_r\|\widetilde q_{\mix,r})
\le r[\omega(\epsilon)+\alpha(\epsilon)]+18r\rho(\epsilon)^r.
\end{equation}
For finite-logit approximate kernels, with $\widetilde{\mathcal E}_2$ defined as in \eqref{eq:E2} for target $\widetilde p_r$ and components $\widetilde P_z^{\otimes r}$,
\begin{equation}\label{eq:perturb-learned}
\KL(\widetilde p_r\|\widetilde q_{\mix,K,r})
\le r[\omega(\epsilon)+2\alpha(\epsilon)]
       +36r\rho(\epsilon)^r+\widetilde{\mathcal E}_2/4.
\end{equation}
\end{proposition}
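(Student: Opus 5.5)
We prove the three bounds of Proposition~\ref{prop:perturb} by perturbing the exact-construction arguments: Lemma~\ref{lem:latent} with Lemma~\ref{lem:firstbatch} for \eqref{eq:perturb-lower}, Lemma~\ref{thm:posterior} for \eqref{eq:perturb-upper}, and Corollary~\ref{cor:oracle} for \eqref{eq:perturb-learned}. The central tool is a continuity estimate for the information quantities in the cost table of Lemma~\ref{lem:firstbatch}. A TV perturbation of $\epsilon$ on a nine-bit law changes every (conditional) entropy of a subset of coordinates by at most $\epsilon\log(2^{k}-1)+h_2(\epsilon)$ (Fannes--Audenaert), with $k\le9$ the number of bits involved. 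Every line cost ($b$, $a$, $2a+b$) and every conditional total correlation is a signed combination of a bounded number of such entropies. We will count these entropies carefully so that the accumulated error per block is at most $\omega(\epsilon)=18\epsilon\log2+10h_2(\epsilon)$. The coefficient $18$ suggests entropies of total bit-weight $18$ (for instance $H$ of nine bits, entering twice), and the coefficient $10$ suggests ten entropy terms.

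\textbf{Lower bound.} Apply Lemma~\ref{lem:latent} to $\widetilde p_r$ with uniform $\nu$. This reduces \eqref{eq:perturb-lower} to showing that the average over $z$ of $\KL(\widetilde P_z^{\otimes r}\|q_{A,K})$ is at least $r[\gamma-\omega(\epsilon)]$. By \eqref{eq:tree} under $\widetilde P_z^{\otimes r}$, and independence across blocks, each block contributes at least the conditional TC discarded inside that block. For a first batch $C_\ell$, we will bound this per-component cost below by the cost computed under $P_z$ minus an $\omega(\epsilon)$ error. The cleanest way is to write the discarded dependence of any three-round tree on a block as
\[
\textstyle\sum(\text{scalar conditional entropies})-H(\text{block}),
\]
and then use the exact-case argument, which lower bounds this by $F(C_\ell)$.

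Lines are no longer independent under $\widetilde P_z$. So instead of using line costs, we will lower bound the whole-block quantity under $\widetilde P_z$ by the same quantity under $P_z$. Here adaptivity matters, because the later batches depend on values. We handle it by minimizing over adaptive trees: the optimal tree cost is a minimum of finitely many such entropy expressions, and each of them moves by at most $\omega$. I expect this uniform continuity over all adaptive trees to be the main obstacle. Conditional entropies given value-dependent histories must be aggregated so that the error does not scale with the number of tree branches. The chain rule helps: along each tree, the expected scalar conditional entropies sum to a quantity of the form $\sum_j H(X_j\mid\text{history})$, and each such term is controlled through joint-entropy continuity with a constant independent of the tree shape.

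\textbf{Upper bound.} Apply Lemma~\ref{thm:posterior} with the four parabolic schedules of Lemma~\ref{lem:slices}. Both $\Delta_s$ and $a_s$ are now nonzero.
\begin{itemize}
\item $\Delta_s$ is a TC under $\widetilde P_s$ of a schedule that has zero TC under $P_s$. By the same entropy continuity it is at most $r\,\omega(\epsilon)$.
\item $a_s$ measures the mismatch between marginals of $\widetilde P_s$ and of $\widetilde p$. Both are within $\epsilon$ of $1/2$ per bit, and $\KL(\Ber(u)\|\Ber(v))\le (u-v)^2/(v(1-v))$ with $v(1-v)\ge1/4-\epsilon^2$. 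Summing over the $3r$ first-slice bits gives $3r\cdot 4\epsilon^2/(1/4-\epsilon^2)=r\alpha(\epsilon)$.
\item For the affinities, $|\Aff(P,Q)-\Aff(\widetilde P,\widetilde Q)|\le 2\sqrt{2\epsilon}$ via $1-\Aff\le$ Hellinger$^2$ and $H^2\le\TV$. Applying this to the three-bit first slice of each block gives a per-block affinity of at most $\beta+2\sqrt{2\epsilon}=\rho$. Since blocks are products, $b_s\le3\rho^r$, and the last term is $18r\rho^r$ as in the exact case.
\end{itemize}

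\textbf{Learned kernels.} Corollary~\ref{cor:oracle}, applied to the preceding component bound $B=r[\omega+\alpha]+18r\rho^r$, gives $2B+\widetilde{\mathcal E}_2/4$. This doubles $\omega$, whereas \eqref{eq:perturb-learned} has a single $\omega$. To recover the sharper form, we use the refined statement inside the proof of Corollary~\ref{cor:oracle}: the error is at most $2\KL(\widetilde P_z\|q_z)-\Delta_z$ plus the squared-logit term. Since $2\KL-\Delta_z=\Delta_z+2(\text{kernel terms})$, only the kernel part ($\alpha$ and the posterior term) is doubled, which yields exactly \eqref{eq:perturb-learned}. Full support of $\widetilde p_r$ holds because $\epsilon$ is small relative to the minimum atom $1/4096$ of $P_z$.
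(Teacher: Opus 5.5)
Your proposal follows the paper's proof essentially step for step: Lemma~\ref{lem:latent} plus a minimum over the finitely many local three-slot trees with a given first set, each with cost $J_P(A)=\sum_j H_P(X_j\mid H_j^A)-H_P(X)$ that moves by at most $\omega(\epsilon)$, gives the lower bound; Lemma~\ref{thm:posterior} with $\Delta_s\le r\omega$, $a_s\le r\alpha$ and a $2\sqrt{2\epsilon}$ affinity perturbation gives the upper bound; and doubling only the scalar-kernel terms gives the learned bound. The step you flagged as the obstacle should not be done by splitting into joint entropies, since that brings in the alphabet of the history and misses $\omega$; instead, apply a conditional bound $|H_P(U\mid V)-H_Q(U\mid V)|\le t\log|\mathcal U|+h_2(t)$ (the paper proves it by a common-part mixture argument) to the pair $(X_j,H_j^A)$ with $t\le\epsilon$, valid because the history is the same deterministic function of the block under both laws, so that nine binary scalar terms plus one nine-bit entropy give exactly $\omega(\epsilon)=9(\epsilon\log2+h_2(\epsilon))+9\epsilon\log2+h_2(\epsilon)$, not a nine-bit entropy entering twice.
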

A strict asymptotic advantage therefore holds whenever $2\omega(\epsilon)+\alpha(\epsilon)<\gamma$, and with approximate kernels satisfying $\widetilde{\mathcal E}_2\le9r\delta^2$ whenever $2\omega(\epsilon)+2\alpha(\epsilon)+9\delta^2/4<\gamma$. For example, at $\epsilon\le10^{-5}$ and $\delta\le0.1$ the bounds certify, for every $r\ge256$, a deterministic error above $0.0898r$ and a randomized error below $0.0794r$. The proof and a rational certificate for this example are in Appendix~\ref{app:perturb}.

\section{Exact certificates and learned predictors}\label{sec:verification}
Our programs compute every deterministic optimum by exhaustive dynamic programming on partial assignments, over all first batches and all value-dependent later batches, and enclose it with rational probabilities and certified logarithm intervals (Appendix~\ref{app:cert}).

We also fit a shared neural predictor to the exact conditional logits of the nine-coordinate target, without access to the component label, freeze it, and reconstruct its output law over all 512 binary vectors (three training seeds; Appendix~\ref{app:training}). Table~\ref{tab:learned} compares the geometric library of Section~\ref{sec:four} with an independently drawn four-schedule library and with uniform sampling over all 1680 ordered partitions with batch sizes $(3,3,3)$. Every randomized method executes one schedule and makes three predictor calls per output.

\begin{table}[htbp]
\centering\small
\setlength{\tabcolsep}{3.5pt}
\begin{tabular}{@{}>{\raggedright\arraybackslash}p{.33\linewidth}rrl@{}}
\toprule
Sampler & Exact-oracle KL & Learned KL & Scheduling randomness\\
\midrule
\multicolumn{4}{@{}l}{\emph{Nine coordinates, three rounds}}\\
All deterministic adaptive policies & $0.07811269$ & --- & none\\
Geometric four-schedule library & $0.07022930$ & $0.07180678\pm0.00026039$ & 2 fair bits\\
Random four-schedule library & $0.06685022$ & $0.06837191\pm0.00009104$ & 2 fair bits\\
Uniform $(3,3,3)$ partitions & $0.04952470$ & $0.05026709\pm0.00007323$ & $\log_2 1680$ bits\\
\midrule
\multicolumn{4}{@{}l}{\emph{Eight coordinates, two rounds}}\\
All deterministic policies & $0.09184633$ & --- & none\\
One-fair-bit two-schedule mixture & $0.07671887$ & --- & 1 fair bit\\
Uniform three-schedule mixture & $0.07650248$ & --- & $\log_2 3$ bits\\
\bottomrule
\end{tabular}
\caption{Divergence in nats. Exact-oracle values in the deterministic rows, the geometric four-schedule row and the eight-coordinate rows are rounded centers of certified rational intervals (Appendix~\ref{app:cert}); the deterministic rows cover every deterministic rule within the round budget and every product kernel. The other oracle values and all learned values are floating-point evaluations of complete output laws; learned values are mean $\pm$ sample standard deviation over three training seeds.}
\label{tab:learned}
\end{table}

Every trained predictor stays strictly below the certified deterministic optimum, so the finite separation survives learning the conditionals. At this size the gain does not hinge on the geometry: a random four-schedule library, drawn once and not selected by its KL, already beats every deterministic policy with the same two random bits, and more scheduling randomness helps further; the geometric library carries the asymptotic guarantee of Theorem~\ref{thm:three}. Appendix~\ref{app:controls} reports controls for unequal weights and for a target without a shared component.

\section{Discussion}
The separation arises from a constraint on how information is used during sampling. A deterministic first batch must serve every possible dependence pattern, and a fixed number of remaining rounds forces some dependence to be discarded. A randomized choice instead distributes its output across the patterns. The label is hidden in both cases, and all conditional queries refer to the same target; randomization therefore buys accuracy that no amount of adaptivity can recover.

Several questions remain open. What is the best attainable mixture error at a fixed depth, and is the exponent of the present construction optimal? How small can a successful library be in general? Which distributions admit useful component-specific schedules, and can such libraries be learned from samples of an unknown target? Finally, how does the separation evolve when the number of rounds grows with the dimension, or when the sampling interface allows joint batch kernels, revision or rejection?

\section*{AI Statement}
Part of the theoretical derivations was carried out with Codex (OpenAI), and Claude (Anthropic) was used to polish the writing. The authors take full responsibility for the content.

\begingroup\small\raggedright

\endgroup

\clearpage
\appendix
\section{Additional three-round bounds}\label{app:three}
\subsection{Total variation}
Call a line bad if its first-round occupancy is not one. Every subset of $V$ has at least three bad lines across the four directions. If its size is not three, each parallel class has a bad line. For three collinear points their own parallel class has three bad lines; for three noncollinear points the pair directions give six. Across blocks, some direction therefore has at least $3r/4$ bad lines.

For a pair law with probability matrix $A$, the maximum affinity with a product law is $\|\sqrt A\|_{\mathrm{op}}$. Unconditional pairs have maximum $\beta$, conditional pairs have maximum $\sigma$, with $\beta\le\sigma<1$, and a triple's affinity is no larger than that of a marginal pair. Fix a component $z$. At every node of the deterministic reveal tree, the transition of $P_z^{\otimes r}$ factors across the lines of direction $z$. Their affinity with the product sampler transition is at most $\sigma^{c(h)}$, where $c(h)$ counts lines contributing at least two fresh coordinates. Each bad line forces such a node on every complete path.

Multiplying square-root transition weights along the tree gives
$\Aff(P_z^{\otimes r},q_{A,K})\le\sigma^{3r/4}$. To justify the multiplication without independence of adaptive decisions, normalize the square-root weights at each node; the affinity of the two output laws is then the expectation of the product of these normalizing masses, and every path product has the asserted bound. Finally, for $p=\nu P+(1-\nu)R$,
\begin{equation}\label{eq:tv-mixture}
\TV(p,q)\ge\nu-\sum_x\min\{\nu P(x),q(x)\}
\ge\nu-\sqrt\nu\Aff(P,q).
\end{equation}
Taking $\nu=1/4$ proves \eqref{eq:three-tv}.

\subsection{The deterministic optimum}
Choose two first-round points in each block, with their directions cycling through the four possibilities. Let $A$ be the resulting $2r$ observations and $\widehat Z$ their maximum-posterior direction. In round two reveal one remaining point from each estimated line, and reveal all remaining data in round three. The batch sizes are $2r,3r,4r$.

The first-observation component laws $F_z$ satisfy
\[
\Aff(F_z,F_w)=\beta^{n_z+n_w}\le\beta^{2\lfloor r/4\rfloor},
\]
where $n_z$ counts first pairs in direction $z$. Consequently, with $e_r=3\beta^{2\lfloor r/4\rfloor}$,
\[
\Pr_p(\widehat Z\ne Z)\le e_r,\qquad H(Z\mid A)\le e_r.
\]
The first inequality is a likelihood-ratio square-root bound; the second follows from the logarithm inequality used in Lemma~\ref{thm:posterior}.

For a true direction $z$, a block whose first pair is parallel to $z$ contributes $a+2b$ when the subsequent batches complete the lines of $z$; every other block contributes $b$. The conditional pair mutual information is $b$ for either value of the conditioning bit. Thus the componentwise cost on every correctly classified path is the constant
\[
L_z=rb+(a+b)n_z.
\]
Let $\mathcal J_z(X)$ denote the cumulative conditional total-correlation cost of the constructed adaptive rule under $P_z^{\otimes r}$. On a misclassified path, its first-round cost is $a n_z$ and its remaining cost is at most $7r\log2$. Since $a n_z\le L_z$, the pointwise bound
\[
\mathcal J_z(X)\le L_z+7r\log2\,\ind\{\widehat Z\ne z\}
\]
holds. Writing $J=\frac14\sum_z\E_{P_z^{\otimes r}}\mathcal J_z(X)$ and using $\sum_z n_z=r$ gives
\[
J\le\frac14\sum_z L_z+7r\log2\,\Pr(\widehat Z\ne Z)
\le r\gamma+7r\log2\,e_r.
\]

Use exact marginals of $p_r$ for this rule. Its first-round scalar kernels are fair under every component. Let $H_3$ be the committed history before the third round and $M$ the average componentwise scalar-kernel mismatch in \eqref{eq:tree}. The pointwise posterior bound from Lemma~\ref{thm:posterior} and the fixed batch sizes $3r,4r$ give
\[
M\le3rH(Z\mid A)+4rH(Z\mid H_3)
\le7rH(Z\mid A)\le7re_r.
\]
The history $H_3$ refines $A$, since every reveal set is a deterministic function of previously committed values. Finally, \eqref{eq:discount} and $H(Z\mid X)\le H(Z\mid A)$ yield
\[
\KL(p_r\|q_A)=J+M-\log4+H(Z\mid X)
\le r\gamma-\log4+[7r(1+\log2)+1]e_r.
\]
This proves \eqref{eq:three-sharp}.

\begin{figure}[t]
\centering\includegraphics{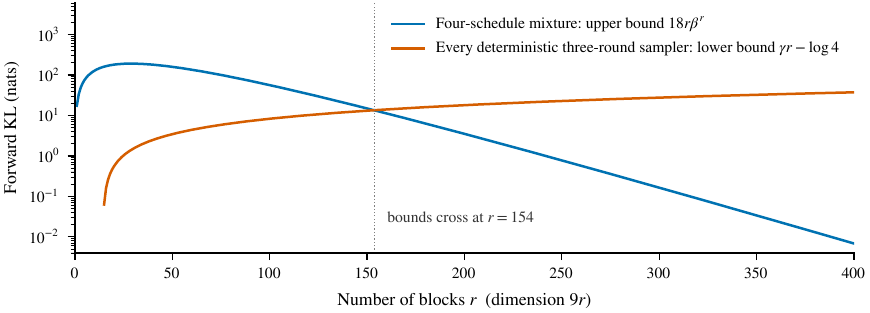}
\caption{The analytic bounds of Theorem~\ref{thm:three}: the upper bound $18r\beta^r$ for the four-schedule mixture and the lower bound $\gamma r-\log4$ for every deterministic three-round sampler. The upper bound lies below the lower bound from $r=154$ on; the exact certificate \eqref{eq:nine} already separates the two classes at $r=1$.}
\label{fig:bounds}
\end{figure}

\section{Proof of the two-round separation}\label{app:two}
\begin{proof}
An edge is bad when both endpoints lie in the same round. For a first set $C$ in a four-bit block, the average number of bad edges over the three matchings is $2,1,2/3,1,2$ at sizes $0,1,2,3,4$. Each bad edge contributes $I_\eta$ under its component. Summing over $2r$ blocks and applying Lemma~\ref{lem:latent} proves the KL bound. Some matching has at least $4r/3$ bad edges, and the reveal-tree affinity argument with \eqref{eq:tv-mixture} gives the TV bound.

For a designated matching $s$, write its two wrong matchings as $w_1,w_2$. In the first $r$ blocks reveal the two endpoints of any fixed edge of $w_1$; in the other $r$ blocks do the same for $w_2$. Each selected pair crosses the designated matching, so its law is uniform and the coordinates of the second round are conditionally independent. Each wrong matching produces a correlated first pair in $r$ blocks and uniform first pairs in the others. Lemma~\ref{thm:posterior} with $N_s=4r$ and $b_s=2\beta_\eta^r$ proves the mixture bound.

The rational certificate for \eqref{eq:eight} is in Appendix~\ref{app:cert}. A one-round exact-marginal sampler has the output law $\prod_jp(X_j)$, independent of scheduling randomness.
\end{proof}
The one-bit finite witness uses two schedules; the exponential bound uses all three.

\section{Proof for every fixed depth}\label{app:depth}
\begin{proof}
A part of the true partition $z$ is bad if its first-round occupancy is not one. Occupancy zero forces a simultaneous pair in the remaining $T-1$ rounds; occupancy at least two incurs such a pair immediately. Each bad part costs at least $a_T$ pathwise. For a first set $C$, the average number of bad parts is at least one unless $|C|=0$ or $2$. At size zero it is two. At size two, both parts are bad exactly when the two coordinates lie in the same part, an event of probability $(T-1)/(2T-1)$; hence the average is at least $2(T-1)/(2T-1)$. Summation and Lemma~\ref{lem:latent} prove \eqref{eq:allTlower}.

Fix a designated partition $s$. For any distinct partition $w$, some pair crosses $s$ but lies within one part of $w$; otherwise the two partitions would coincide up to exchanging their parts. Cycle through the $K-1$ wrong partitions over blocks and reveal such a pair first. In each remaining round reveal one remaining coordinate from each $s$-part. This factorizes the designated component, and every scalar marginal is fair. Under each wrong component, the law of the first batch has affinity at most $\beta^{\lfloor r/(K-1)\rfloor}$ with its law under the designated component, since the first batch contains that many pairs internal to the wrong component. Apply Lemma~\ref{thm:posterior} with $2(T-1)r$ remaining coordinates.
\end{proof}

\section{Proof of the logit perturbation bound}\label{app:logit}
\begin{proof}
The Bernoulli log partition $A(b)=\log(1+e^b)$ satisfies $A''\le1/4$. Its Bregman remainder is at most $e^2/8$, giving
\[
\KL(u\|\widehat v)\le\KL(u\|v)+(v-u)e+e^2/8.
\]
Pinsker's inequality gives $|u-v|\le\sqrt{\KL(u\|v)/2}$, and
$\sqrt{D/2}|e|\le D+e^2/8$. Substitution proves the claim.
\end{proof}

\section{Exact finite certificates}\label{app:cert}
\subsection{The eight-bit target and schedules}
Index the vertices in each four-bit block by $0,1,2,3$. At $\eta=1/4$, an agreeing edge has mass $3/8$ and a disagreeing edge mass $1/8$. For two blocks, the mixture has rational masses
\[
p(x)=\frac1{12288}\sum_{s=0}^2 3^{a_s(x)},
\]
where $a_s(x)$ is the number of agreeing edges among the four edges of matching $s$ in the two blocks. Three first batches are
\[
C_0=\{0,2,4,7\},\quad C_1=\{0,1,4,7\},\quad C_2=\{0,1,4,6\},
\]
and their complements form round two. The one-bit witness uses $C_0$ or $C_2$, each with probability $1/2$. All kernels are exact conditionals of the marginal mixture.

Write $D=12288$ and $w_x=Dp(x)$. For each first set $C$, let $\mu(C,a)$ be the corresponding integer marginal mass. The scalar cross-entropy numerator is
\[
|C|D\log2+\sum_a\sum_{j\notin C}
\left\{\mu(C,a)\log\mu(C,a)-\sum_{u=0}^1\mu(C\cup\{j\},a,u)\log\mu(C\cup\{j\},a,u)\right\}.
\]
Subtract $D\log D-\sum_xw_x\log w_x$ and divide by $D$. Taking the minimum over all nonempty $C$ gives the two-round optimum; by \eqref{eq:tree}, exact marginal kernels also minimize over arbitrary product kernels for each such rule.

The program \texttt{certify\_eight\_bit.py} encloses the optimum in
\[
[0.091846330389724180153102,\;0.091846330389724180153103],
\]
the one-bit mixture in
\[
[0.076718867510585191046361,\;0.076718867510585191046362],
\]
and the three-schedule mixture in
\[
[0.076502475313345167770171,\;0.076502475313345167770172].
\]
The decimal endpoints are rounded outwards.

\subsection{The nine-bit adaptive computation}
The target is \eqref{eq:dataonly} at $r=1$, with denominator $16384$. The computation has $3^9=19683$ partial states, 242461 second-stage actions and 1933442 action--outcome branches. A ternary state records whether each coordinate is unrevealed, zero or one. Let $\mu(h)$ be its integer mass and $V_t(h)$ the minimum remaining scalar cross-entropy multiplied by that mass. Completed states have value zero, and one remaining round forces all unrevealed coordinates to be drawn. For a nonempty reveal set $B$, the interval recursion evaluates
\begin{align*}
\mathcal C_t(h,B)
={}&\sum_{j\in B}\left[\mu(h)\log\mu(h)-\sum_{u=0}^1\mu(h,j=u)\log\mu(h,j=u)\right]\\
&+\sum_{y_B}V_{t-1}(h,y_B),\qquad
V_t(h)=\min_B\mathcal C_t(h,B).
\end{align*}
The minimum of all lower endpoints is a lower bound on the true minimum and the minimum of all upper endpoints an upper bound; the two minima need not be attained by the same action. A partial assignment and remaining budget determine both the conditional target and the feasible continuations, so retaining the earlier reveal order cannot improve the optimum.

\texttt{certify\_nine\_bit.py} reconstructs the target masses, generates rational log enclosures and evaluates the Bellman recursion in arbitrary-precision integer arithmetic. After subtracting the target entropy, the optimum is enclosed by
\[
[0.078112688714513902569512,\;0.078112688714513902569513].
\]
Independent rational reconstruction of all four sampler laws gives mixture KL in
\[
[0.070229296322072692413195,\;0.070229296322072692413196].
\]
The random choice of schedule fixes all three batches in advance, and every kernel in a batch uses the same history from before the round.

\subsection{Certified logarithms}
For $1\le y\le2$, let $t=(y-1)/(y+1)\in[0,1/3]$. Then
\[
\log y=2\sum_{j=0}^{N-1}\frac{t^{2j+1}}{2j+1}+R_N,
\qquad
0\le R_N\le\frac{2t^{2N+1}}{(2N+1)(1-t^2)}.
\]
All series terms and the remainder bound are evaluated as exact rational numbers with $N=48$. Power-of-two range reduction reduces any positive integer logarithm to this interval, and subtracting two integer log intervals encloses the logarithm of any positive rational. Both certificates use a $10^{-36}$ grid with arbitrary-precision integer arithmetic, and the mixture calculations retain exact rational probabilities throughout.

\subsection{Random-library control}\label{app:randomlibrary}
In the coordinate numbering used by our programs, the four control schedules are
\[
\begin{array}{ccc}
\{2,5,8\} & \{0,3,4\} & \{1,6,7\}\\
\{1,3,4\} & \{2,6,7\} & \{0,5,8\}\\
\{1,2,3\} & \{6,7,8\} & \{0,4,5\}\\
\{1,6,7\} & \{2,5,8\} & \{0,3,4\}.
\end{array}
\]
Rows are schedules and columns are successive batches; points of $\F_3^2$ are ordered lexicographically, starting with index zero. The library is drawn offline with seed 20261016 as four distinct uniformly drawn partitions, and the first generated library is reported. It uses the same predictor checkpoints and the same two-bit selection rule as the geometric library.

\section{A local criterion for deterministic error}\label{app:local}
The explicit geometry can be separated from the lower-bound argument. Suppose $P_z$ is a full-support distribution on one block of $m$ coordinates. Let $c_z(C,T)$ be the minimum expected total-correlation cost under $P_z$ among local reveal rules with first local batch $C$ and $T$ global time slots; empty local slots are allowed, and the component index is granted to this local optimization. Define
\[
\kappa_T=\min_{C\subseteq[m]}\sum_z\nu_zc_z(C,T).
\]
For $p_r=\sum_z\nu_zP_z^{\otimes r}$, every deterministic product sampler with round budget $T$ satisfies
\begin{equation}\label{eq:localgeneral}
\KL(p_r\|q)\ge r\kappa_T-H(\nu).
\end{equation}
To see this, apply \eqref{eq:tree} under each component. Conditional on the latent index and a consistent history, blocks remain independent, so their conditional total correlations add. Fixing the complete values of all other blocks induces a deterministic local rule with the same first set and at most $T$ slots; those values are independent of the current block under $P_z^{\otimes r}$. Its expected cost is at least $c_z(C,T)$, and averaging the external values preserves the bound. Summing over blocks and invoking Lemma~\ref{lem:latent} proves \eqref{eq:localgeneral}.

The quantity $\kappa_T$ is finite and checkable. For fixed $m,T,K$ the local policy set is finite and its objective consists of continuous entropy expressions, so $\kappa_T$ is continuous on full-support component laws, and $\kappa_T>0$ persists under small perturbations of the components.

\section{Proof of the schedule-weight bounds}\label{app:weights}
For $s\ne z$, marginalizing to the first batch of schedule $s$ gives
\begin{equation}\label{eq:offdiag}
\Aff(P_z^{\otimes r},q_{s,K_s,r})\le\beta^r.
\end{equation}
Indeed, under $P_z$ that batch contains a correlated pair with matrix
$A=\left(\begin{smallmatrix}3/8&1/8\\1/8&3/8\end{smallmatrix}\right)$ and one independent fair bit, while its sampler law is a coordinate product. For any product law on the pair, the affinity is $x^\top\sqrt A\,y$, where $x,y$ are nonnegative unit vectors, with maximum
$\|\sqrt A\|_{\mathrm{op}}=\sqrt{3/8}+\sqrt{1/8}=\beta$.
The singleton contributes at most one; multiplication over blocks and affinity data processing prove \eqref{eq:offdiag}, independently of every later kernel.

Pointwise subadditivity of the square root gives, for any probability vector $\pi$, including those with zero entries,
\[
\Aff(P_z^{\otimes r},q_{\pi,K,r})
\le\sqrt{\pi_z}+\beta^r\sum_{s\ne z}\sqrt{\pi_s}.
\]
Jensen's inequality yields $\KL(P\|Q)\ge-2\log\Aff(P,Q)$, and Lemma~\ref{lem:latent} therefore proves
\begin{equation}\label{eq:weights-unified}
\KL(p_r\|q_{\pi,K,r})\ge
-\frac12\sum_z\log\left(\sqrt{\pi_z}+\beta^r\sum_{s\ne z}\sqrt{\pi_s}\right)-\log4.
\end{equation}
For positive $\pi$, factoring $\sqrt{\pi_z}$ out of each logarithm gives \eqref{eq:weights-lower}. Joint-law data processing and the per-component posterior bound give
\[
\KL(p_r\|q_{\pi,\mathrm{exact},r})
\le\KL(u_4\|\pi)+\frac14\sum_z\KL(P_z^{\otimes r}\|q_{z,\mathrm{exact},r})
\le\KL(u_4\|\pi)+18r\beta^r,
\]
which is \eqref{eq:weights-upper}; $\varepsilon_r(\pi)\to0$ gives the stated limit for each fixed positive $\pi$.

If the support $S$ of $\pi$ has size $m\le3$, each $z\notin S$ satisfies
\[
\Aff(P_z^{\otimes r},q_{\pi,K,r})
\le\beta^r\sum_{s\in S}\sqrt{\pi_s}\le\sqrt m\,\beta^r.
\]
Using the divergence bound for these $4-m$ components and nonnegativity for the others, the component identity gives \eqref{eq:missing-weight}, uniformly over all supported weights and scalar kernels.

\section{Proof of stability under component perturbations}\label{app:perturb}
We first bound the change in the cost of an arbitrary local adaptive rule. For distributions $P,Q$ on a pair $(U,V)$ with $t=\TV(P,Q)$,
\begin{equation}\label{eq:entropy-cont}
|H_P(U\mid V)-H_Q(U\mid V)|\le t\log|\mathcal U|+h_2(t).
\end{equation}
For $0<t<1$, write $P=(1-t)R+tP_1$ and $Q=(1-t)R+tQ_1$ using their common part and introduce the Bernoulli mixture indicator $J$. Conditional entropy equals the weighted component conditional entropies plus $I(U;J\mid V)\in[0,h_2(t)]$, and the residual conditional entropies lie in $[0,\log|\mathcal U|]$, proving \eqref{eq:entropy-cont}; the endpoints are immediate. Sharper continuity bounds are known \citep{alhejji}.

Fix a deterministic reveal tree $A$ on $m$ bits, and let $H_j^A(X)$ record the committed history immediately before coordinate $j$ is drawn. A completed vector determines that history, so the map is the same under any reference law, and each feasible history is the cylinder specified by its committed values. The expected sum of conditional total correlations is therefore
\[
J_P(A)=\sum_{j=1}^m H_P(X_j\mid H_j^A)-H_P(X).
\]
If $\TV(P,Q)\le\epsilon\le1/2$, applying \eqref{eq:entropy-cont} to each induced pair $(X_j,H_j^A)$ and to $X$ itself, and using that total variation decreases under deterministic maps, gives the uniform bound
\begin{equation}\label{eq:cost-cont}
|J_P(A)-J_Q(A)|\le
2m\epsilon\log2+(m+1)h_2(\epsilon)=:\omega_m(\epsilon).
\end{equation}
The bound also holds for local trees with empty time slots and survives taking a minimum over any fixed set of trees.

For the components $P_z$, the local quantity $\kappa_3$ of Appendix~\ref{app:local} is at least $\gamma$, so \eqref{eq:cost-cont} gives
$\widetilde\kappa_3\ge\gamma-\omega_9(\epsilon)$.
Conditional on $z$, the perturbed blocks remain independent, and fixing all values outside a block induces a local tree with the same first set, so the argument for \eqref{eq:localgeneral} applies. Summing the costs over blocks and averaging over components gives at least $r\widetilde\kappa_3$, and the entropy discount of at most $\log4$ proves \eqref{eq:perturb-lower}, including arbitrary scalar kernels. Full support follows since the smallest atom of $P_z$ is $1/4096$ and each atom changes by at most $\epsilon<1/4096$.

For the upper bounds, keep the original schedule assigned to each component. Its unperturbed local cost is zero, so \eqref{eq:cost-cont} bounds its perturbed cost by $\omega_9(\epsilon)=\omega(\epsilon)$, and the $r$ blocks multiply the cost by $r$. Every perturbed scalar marginal, and its mixture marginal, lies in $[1/2-\epsilon,1/2+\epsilon]$, so
\[
\KL(\Ber(u)\|\Ber(v))\le\frac{(u-v)^2}{v(1-v)}
\le\frac{4\epsilon^2}{1/4-\epsilon^2}.
\]
With three coordinates in each first slice, the first-round mismatch is at most $r\alpha(\epsilon)$. For probability vectors $U,\widetilde U$,
$\|\sqrt U-\sqrt{\widetilde U}\|_2\le\sqrt{2\TV(U,\widetilde U)}$, so changing each of two prefix laws by at most $\epsilon$ in TV changes their affinity by at most $2\sqrt{2\epsilon}$, by Cauchy--Schwarz. Every wrong-component prefix affinity is consequently at most $\rho(\epsilon)$ per block and $\rho(\epsilon)^r$ across blocks, and Lemma~\ref{thm:posterior} gives a later mismatch at most $6r\cdot3\rho(\epsilon)^r$. Adding the dependence cost and the first-round mismatch proves \eqref{eq:perturb-upper}. Finally, applying the logit perturbation inequality only to the scalar-kernel terms in \eqref{eq:tree}, the dependence cost is paid once while the scalar mismatch bound is doubled; averaging over components gives \eqref{eq:perturb-learned}.

\paragraph{A numerical case.}
At $\epsilon=10^{-5}$, rational logarithm enclosures and rational square comparisons certify
\[
\begin{gathered}
\gamma>0.0966,\quad\omega<0.001377,\quad\alpha<5\cdot10^{-9},\\
\rho<39/40,\quad(39/40)^{256}<0.00154,\quad\log4<1.387.
\end{gathered}
\]
Consequently, for every $r\ge256$, the deterministic lower bound divided by $r$ exceeds
\[
0.0966-0.001377-1.387/256>0.0898.
\]
If $\widetilde{\mathcal E}_2\le9r/100$, the randomized upper bound divided by $r$ is below
\[
0.001377+10^{-8}+36(0.00154)+9/400<0.0794.
\]
The bounds also apply to smaller $\epsilon$ and error budgets, since the relevant upper bounds increase on this interval. The program \texttt{certify\_perturbation.py} verifies each strict inequality in rational arithmetic.

\section{Conditional-predictor experiment}\label{app:training}
We distilled the exact marginal logits of the nine-bit target into a shared MLP. Its 18 inputs are the observed values, encoded as $-1,0,1$, and nine observation indicators; two width-128 GELU layers produce nine logits, for 20105 parameters in total. Each of three training seeds controls the initialization and the sample and mask streams. Each run uses 3000 Adam steps with 512 fresh target draws per step, independent Bernoulli$(1/2)$ observation masks and squared logit loss on unrevealed positions, with learning rate $0.003$ decayed by a cosine schedule to $0.00015$. The teacher supplies exact conditional logits computed from the full target law, and the component label is never supplied to the network.

The geometric-library KL values of the three predictors are $0.0718770186$, $0.0715184720$ and $0.0720248477$, each below the certified deterministic optimum $0.0781126887$. Their root-mean-square logit errors along the assigned schedules, $\sqrt{\mathcal E_2/9}$, are $0.02914$, $0.02969$ and $0.02997$. For comparison with a larger budget, the four-round deterministic optimum is approximately $0.0303017054$; all comparisons in the main text use three rounds.

\section{Schedule weights and shared dependence}\label{app:controls}
\begin{lemma}[Mismatched schedule weights]\label{cor:weights}
Suppose $p_r=\sum_z\nu_zP_{z,r}$, the finite component family satisfies
$\Aff(P_{z,r},P_{w,r})\to0$ for $z\ne w$, and
$\KL(P_{z,r}\|q_{z,r})\to0$. For any fixed strictly positive schedule law $\pi$,
\[
\KL\left(p_r\middle\|\sum_z\pi_zq_{z,r}\right)\longrightarrow\KL(\nu\|\pi).
\]
\end{lemma}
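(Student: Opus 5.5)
Write $q=\sum_z\pi_zq_{z,r}$. The plan is to prove matching upper and lower bounds, each with the limit $\KL(\nu\|\pi)$, following the two halves of Appendix~\ref{app:weights}. The hypotheses are only asymptotic orthogonality of the components and asymptotic accuracy of each schedule on its own component. So I will not use any structure of the schedules beyond what $\Aff$ and $\KL$ see.

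\emph{Upper bound.} Put the latent labels on both sides: the joint law $\nu_zP_{z,r}(x)$ for the target and $\pi_zq_{z,r}(x)$ for the sampler. Data processing on the $x$-marginal gives
\[
\KL(p_r\|q)\le\KL(\nu\|\pi)+\sum_z\nu_z\KL(P_{z,r}\|q_{z,r}).
\]
The second term tends to $0$ by hypothesis, so $\limsup_r\KL(p_r\|q)\le\KL(\nu\|\pi)$.

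\emph{Lower bound.} Start from Lemma~\ref{lem:latent}:
\[
\KL(p_r\|q)=\sum_z\nu_z\KL(P_{z,r}\|q)-I_r(Z;X).
\]
Since $I_r(Z;X)\le H(\nu)$, it suffices to show two things:
\begin{itemize}
\item $\liminf_r\KL(P_{z,r}\|q)\ge-\log\pi_z$ for each $z$;
\item $I_r(Z;X)\to H(\nu)$.
\end{itemize}
Together these give $\liminf_r\KL(p_r\|q)\ge-\sum_z\nu_z\log\pi_z-H(\nu)=\KL(\nu\|\pi)$.

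For the second item, $H(Z\mid X)$ is bounded by the same square-root inequality used in Lemma~\ref{thm:posterior}:
\[
H(Z\mid X)\le\sum_z\nu_z\sum_{w\ne z}\sqrt{\nu_w/\nu_z}\,\Aff(P_{z,r},P_{w,r}).
\]
This tends to $0$ by the orthogonality hypothesis.

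For the first item, I would use $\KL(P\|Q)\ge-2\log\Aff(P,Q)$ together with subadditivity of the square root:
\[
\Aff(P_{z,r},q)\le\sqrt{\pi_z}\,\Aff(P_{z,r},q_{z,r})+\sum_{w\ne z}\sqrt{\pi_w}\,\Aff(P_{z,r},q_{w,r}).
\]
Bounding the first term by $\sqrt{\pi_z}$, it remains to show $\Aff(P_{z,r},q_{w,r})\to0$ for $w\ne z$. This is the main obstacle, because the hypotheses concern $P_{w,r}$ rather than the sampler law $q_{w,r}$. I would transfer the bound through the triangle inequality for the Hellinger distance, using $\Aff=1-H^2$:
\[
H(P_{z,r},q_{w,r})\ge H(P_{z,r},P_{w,r})-H(P_{w,r},q_{w,r}).
\]
Here $H(P_{z,r},P_{w,r})^2=1-\Aff(P_{z,r},P_{w,r})\to1$. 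Also $H(P_{w,r},q_{w,r})^2\le\KL(P_{w,r}\|q_{w,r})/2\to0$, by Jensen applied to $-2\log\Aff$. Hence $H(P_{z,r},q_{w,r})^2\to1$, that is, $\Aff(P_{z,r},q_{w,r})\to0$.

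The $\limsup$ bound from the first half then matches this $\liminf$ bound, which proves the convergence. Strict positivity of $\pi$ is used only to keep $-\log\pi_z$ finite and to make the upper bound meaningful.
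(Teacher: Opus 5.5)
Your proof is correct. The upper bound is the same joint-law data-processing step the paper uses, but your lower bound takes a different route.

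\textbf{The paper's lower bound.} The paper never bounds affinities of the sampler. It passes the completed vector through the maximum-posterior classifier $\widehat Z$ for $p_r$. The pairwise affinity bound shows that $\widehat Z$ errs with vanishing probability under each $P_{z,r}$. Pinsker's inequality transfers this to each $q_{z,r}$, so effectively only $\TV(P_{z,r},q_{z,r})\to0$ is used. Hence the law of $\widehat Z$ tends to $\nu$ under $p_r$ and to $\pi$ under the sampler. Data processing through $\widehat Z$, together with continuity of $\KL$ at $(\nu,\pi)$ for strictly positive $\pi$, then gives the $\liminf$.

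\textbf{Your lower bound.} You combine Lemma~\ref{lem:latent} with $\KL\ge-2\log\Aff$ and square-root subadditivity, which is the argument behind \eqref{eq:weights-unified}. In place of the schedule-specific bound \eqref{eq:offdiag}, you use the Hellinger triangle inequality to obtain off-diagonal orthogonality abstractly from the hypotheses.

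\textbf{What each route buys.} Your route gives an explicit non-asymptotic bound,
\[
\KL(p_r\|q)\ge\KL(\nu\|\pi)-2\sum_z\nu_z\log\Bigl(1+\sum_{w\ne z}\sqrt{\pi_w/\pi_z}\,\Aff(P_{z,r},q_{w,r})\Bigr).
\]
Here $\Aff(P_{z,r},q_{w,r})$ is controlled by $\Aff(P_{z,r},P_{w,r})$ and $\KL(P_{w,r}\|q_{w,r})$. At $\nu=u_4$ this is exactly the form of \eqref{eq:weights-lower}. The paper's classifier argument is shorter but purely qualitative: it yields no rate without a continuity modulus for $\KL$.

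\textbf{Two small points.} Your second bullet, $I_r(Z;X)\to H(\nu)$, is not needed. The inequality $I_r(Z;X)\le H(\nu)$ already gives $\KL(p_r\|q)\ge\sum_z\nu_z\KL(P_{z,r}\|q)-H(\nu)$, so the first bullet alone yields the $\liminf$. Also, you use $H$ both for entropy and for the Hellinger distance; rename one of them.
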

\begin{proof}
Joint-law data processing gives the upper bound
$\KL(\nu\|\pi)+\sum_z\nu_z\KL(P_{z,r}\|q_{z,r})$.
A maximum-posterior classifier based on the completed data identifies each component with error tending to zero by the pairwise affinity bound, and Pinsker's inequality transfers this property from $P_{z,r}$ to $q_{z,r}$. The classifier's output law therefore tends to $\nu$ under the target and to $\pi$ under the sampler, and KL data processing through the classifier gives the matching lower limit.
\end{proof}

For conditionally iid blocks the affinities of the first-batch laws multiply. If the assigned schedule factorizes its component exactly, all scalar marginals agree in the first round, and under every wrong component the law of the first batch has affinity at most $\beta<1$ per block, then \eqref{eq:posterior} is $O(r\beta^r)$.

We estimated final-law KL at $r=1,4,16,64,128,256$ and $512$ blocks, each from 10000 independent target draws with exact marginal kernels evaluated in floating point (Figure~\ref{fig:mechanism}). Because $18r\beta^r$ becomes small only for $r$ in the hundreds, the uniform-weight estimates first rise, from about $0.068$ at $r=1$ to $1.642$ at $r=16$, and then fall, to $0.235$ at $r=128$, $5.5\times10^{-5}$ at $r=256$ and $1.3\times10^{-11}$ at $r=512$, where the analytic guarantee is $18r\beta^r<1.803\times10^{-4}$. Two controls isolate the two ingredients of the construction. With schedule weights $0.7,0.1,0.1,0.1$, Proposition~\ref{prop:weights} predicts the floor $\KL(u_4\|\pi)=0.4298132$; the estimate at $r=512$ is $0.42631$ with empirical standard error $0.00845$. Redrawing the latent direction independently in every block removes the evidence pooled across blocks; with exact kernels for this altered product target and one schedule drawn for all blocks, the estimate at $r=512$ is $54.4265$ with standard error $0.0904$. A shared hidden component and uniform weights are thus both essential to the exponential accuracy.

\begin{figure}[htbp]
\centering\includegraphics{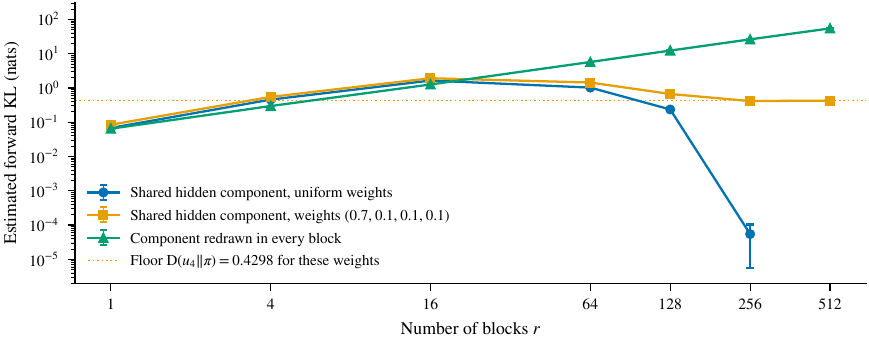}
\caption{Monte Carlo estimates of forward KL with exact marginal kernels, from 10000 target draws per point; error bars are standard errors. With a shared hidden component and uniform weights the error collapses once $r$ reaches the hundreds (the estimate at $r=512$, $1.3\times10^{-11}$, lies below the plotted range). Unequal weights level off at the floor $\KL(u_4\|\pi)$ of Proposition~\ref{prop:weights}, and redrawing the component in every block makes the error grow roughly linearly in $r$.}
\label{fig:mechanism}
\end{figure}

\end{document}